\definecolor{mygreen}{rgb}{0,0.6,0}
\definecolor{mygray}{rgb}{0.5,0.5,0.5}
\definecolor{mymauve}{rgb}{0.58,0,0.82}
\definecolor{altblue}{rgb}{0.0,0.6,1.0}
\definecolor{lstbg}{cmyk}{0.05, 0.01, 0, 0}
\definecolor{morebluish}{cmyk}{0.06,0.04,0,0}
\newcommand{\diag}[1]{\ensuremath{\mathrm{diag}{#1}}}
\newtheorem{lemma}{Lemma}
\newcounter{remarkcounter}[section]
\newenvironment{remark}[1][]{\refstepcounter{remarkcounter}\par\medskip
   \noindent \textbf{Remark~\theremarkcounter. #1} \rmfamily}{\medskip}
\newcommand{\mat}[1]{\ensuremath{\boldsymbol{#1}}}
\newcommand{\A}{\ensuremath{\mat{A}}}
\newcommand{\D}{\ensuremath{\mat{D}}}
\newcommand{\E}{\ensuremath{\mat{E}}}
\renewcommand{\H}{\ensuremath{\mat{H}}}
\newcommand{\M}{\ensuremath{\mat{M}}}
\renewcommand{\S}{\ensuremath{\mat{S}}}
\newcommand{\T}{\ensuremath{\mat{T}}}
\renewcommand{\emph}[1]{\textsl{#1}}
\begin{document}
\fancyhead{}

\title[Bohemian Geometry]{Bohemian Matrix Geometry}

\author{Robert M.~Corless}
\email{rcorless@uwaterloo.ca}
\orcid{0000-0003-0515-1572}

\affiliation{%
  \institution{Cheriton School of Computer Science\\ University of Waterloo}
  \country{Canada}
  }
\author{George Labahn}
\email{glabahn@uwaterloo.ca}
\affiliation{%
  \institution{Cheriton School of Computer Science\\ University of Waterloo}
  \country{Canada}
}

\author{Dan Piponi}
\email{dan.piponi@epicgames.com}
\orcid{0000-0002-6599-2835}
\affiliation{\institution{Epic Games}\country{USA}}

\author{Leili Rafiee Sevyeri}
\orcid{0000-0002-7819-9764}
\email{leili.rafiee.sevyeri@uwaterloo.ca}
\affiliation{%
  \institution{Cheriton School of Computer Science\\ University of Waterloo}
  \country{Canada}
}


\begin{abstract}
A Bohemian matrix family is a set of matrices all of whose entries are drawn from a fixed, usually discrete and hence bounded, subset of a field of characteristic zero.  Originally these were integers---hence the name, from the acronym BOunded HEight Matrix of Integers (BOHEMI)---but other kinds of entries are also interesting.  Some kinds of questions about Bohemian matrices can be answered by numerical computation, but sometimes exact computation is better.  In this paper we explore some Bohemian families (symmetric, upper Hessenberg, or Toeplitz) computationally, and answer some  open questions posed about the distributions of eigenvalue densities.
\end{abstract}

\begin{CCSXML}
<ccs2012>
   <concept>
       <concept_id>10010147.10010148.10010149.10010154</concept_id>
       <concept_desc>Computing methodologies~Hybrid symbolic-numeric methods</concept_desc>
       <concept_significance>500</concept_significance>
       </concept>
   <concept>
       <concept_id>10010147.10010148.10010149.10010158</concept_id>
       <concept_desc>Computing methodologies~Linear algebra algorithms</concept_desc>
       <concept_significance>500</concept_significance>
       </concept>
   <concept>
       <concept_id>10010147.10010148.10010149.10010152</concept_id>
       <concept_desc>Computing methodologies~Symbolic calculus algorithms</concept_desc>
       <concept_significance>500</concept_significance>
       </concept>
 </ccs2012>
\end{CCSXML}

\ccsdesc[500]{Computing methodologies~Hybrid symbolic-numeric methods}
\ccsdesc[500]{Computing methodologies~Linear algebra algorithms}
\ccsdesc[500]{Computing methodologies~Symbolic calculus algorithms}
\keywords{Bohemian matrix, height, upper Hessenberg, Toeplitz, complex symmetric}

\maketitle
\setlength{\parindent}{5mm} 
\section{Introduction}

 \begin{figure}
     \centering
     \includegraphics[width=0.9\textwidth]{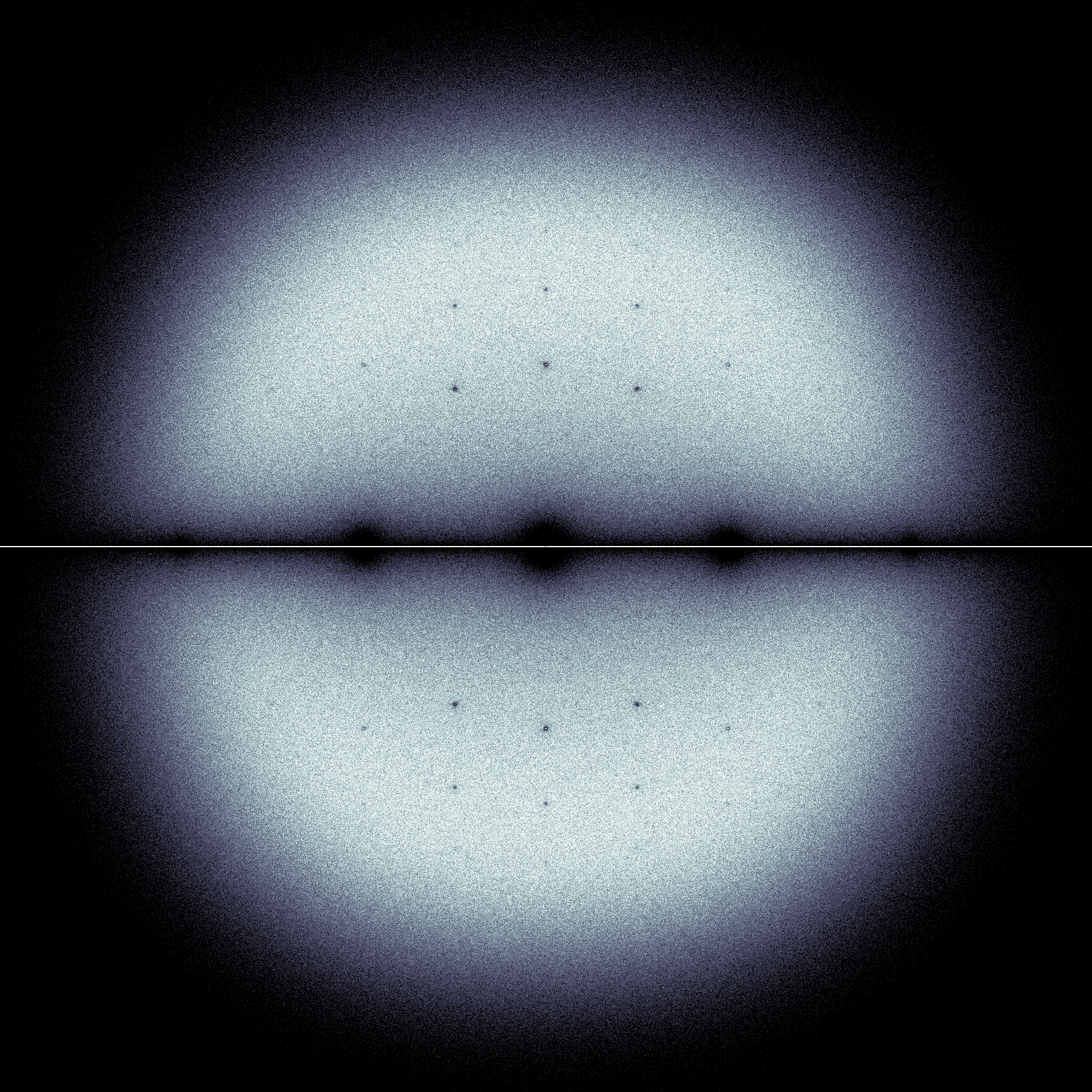}
    \caption{Eigenvalues of $5\cdot 10^6$ dimension $m=8$ matrices with entries chosen uniformly at random among the population $P=\{-1,0,1\}$.  We see that the uniform distribution as $m\to\infty$ result of~\cite{tao2006random} is evident already.  As $m$ increases, the ``holes'' close up, and the relative percentage of the real eigenvalues (which extend past the $\sqrt{m}$ disk radius) becomes negligible. }
     \label{fig:uniform}
     \Description{A fuzzy-edged light-grey disk, bisected with a dark line with dark bumps at $0$ and $\pm 1$.  Reminiscent of the moon occulted by a dark shadow.  Smaller holes are placed symmetrically on on near-circular arcs. }
 \end{figure}
A \textsl{Bohemian matrix family} is a set of matrices all of whose entries are drawn from a fixed finite population, usually integers, algebraic integers, or Gaussian integers.
The name ``Bohemian'' was invented in 2015 at the Fields Institute Thematic Year in Symbolic Computation; the mnemonic is useful because it highlights searching for commonality among features of matrices with discrete populations.
Our original interest was for \textsl{software testing}, and as a testing ground for optimization over (in search of improved computational bounds for certain quantities, such as the growth factor in Gaussian elimination with complete pivoting, or the departure from normality).  Bohemian matrices are a \textsl{specialization} in the sense of P\'olya, and have led now to several workshops, at \href{https://nhigham.com/2018/10/15/bohemian-matrices-in-manchester/}{Manchester in 2018}, at ICIAM in 2019, and at SIAM in 2021. There have been several publications since, including~\cite{Chan2020}, \cite{fasi2020determinants}, \cite{DudaFonsecaXuYe+2021+505+514}, \cite{Sendra2022}, and the very interesting~\cite{bogoya2022upper} which explores a connection to the asymptotic spectral theory of Toeplitz matrices~\cite{Schmidt1960}, which is very much alive today: see e.g.~\cite{bottcher2005spectral,bottcher2012introduction,barrera2018eigenvalues,garoni2017generalized} and~\cite{Bottcher2021}.

The study of matrices with rational integer entries is very old, and the literature too vast to survey coherently here.  We instead point to the early survey by Olga Taussky--Todd~\cite{Taussky1960} as an entry point. We are also going to be working with Gaussian integer and algebraic integer entries; see for instance~\cite{butson1962generalized} for important work on generalized Hadamard matrices where the entries are roots of unity.

The study of \textsl{random} matrices where the entries are drawn from discrete distributions is also very advanced; see~\cite{tao2006random,tao2017random} for instance. Those papers established that dense square matrices of dimension $m$ whose entries are drawn from a discrete population, say $-1$, $0$, and $1$, have eigenvalues that are asymptotically \emph{uniformly distributed} on a disk of radius $\sqrt{m}$.  See Figure~\ref{fig:uniform}.

However, if the matrices are \emph{structured}, other pictures arise, and little is known about the asymptotic distributions of their eigenvalues.  For instance, in~\cite{Corless:2021:WhatCan} we find \emph{skew-symmetric tridiagonal matrices} with unexpectedly \emph{square} distributions, or even diamond-shaped, as in Figure~\ref{fig:diamond}.
We will explain some of these shapes in this paper, and prove that they will \emph{not} fill out to disk shapes as $m \to \infty$.  We will also explain some other interesting features that arise in certain Bohemian families, including upper Hessenberg Toeplitz Bohemians.

\begin{figure}
    \centering
    \includegraphics[width=0.9\textwidth]{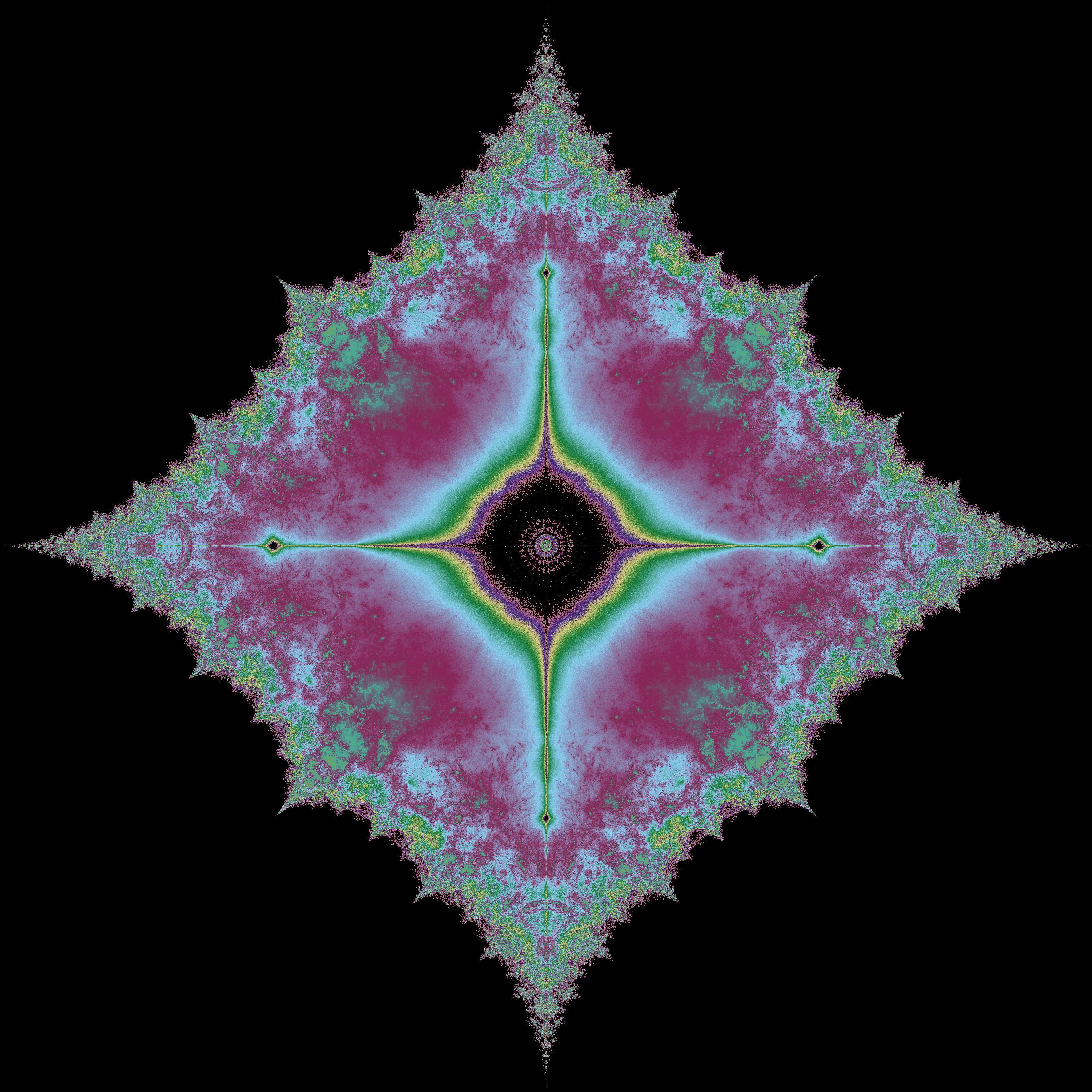}
    \caption{Eigenvalues of \emph{all} dimension $m=31$ skew-symmetric tridiagonal matrices with entries drawn from $P = \{-1,i,1,-i\}$, the fourth roots of unity. Picture courtesy Aaron Asner. We explain the unexpected eigenvalue geometry in this paper.}
\Description{A pinkish fractal diamond shape with a hole in the center outlined in yellow and red with a green and blue cross surrounding the hole.  The fractal edges on the outside are mostly green. The fractal edges are reminiscent of a saw blade. The sharpest peaks are at the four main corners $(\pm\sqrt{2},0)$, $(0,\pm\sqrt{2})$.}
    \label{fig:diamond}
\end{figure}

There is a significant connection to number theoretic works, as well. Kurt Mahler~\cite{mahler1963two} was interested in the distribution of zeros of polynomials with given \textsl{length} (the one-norm of the vector of coefficients) and \textsl{height} (the infinity-norm of the vector of coefficients). This is connected with the Littlewood conjecture for polynomials~\cite{littlewood1968} (How large on the unit circle must a polynomial with $-1$,$1$ coefficients be?).  The numerical visualization of zeros of polynomials with coefficients $0$ or $1$  was apparently first done in~\cite{Odlyzko1992}, who proved that the limiting set was connected; later work explained the ``holes"~\cite{borwein1997polynomials} and visualizations by Peter Borwein and Loki J\"orgenson made several other questions clearer~\cite{borwein2001visible}.  See also the web pages of \href{https://math.ucr.edu/home/baez/roots/}{John Carlos Baez} and of \href{https://jdc.math.uwo.ca/roots/}{Dan Christensen}. Their article, \textsl{The Beauty of Roots}, published on Baez' website at the previous link, explains quite a few of the visible structures. Then we will see a connection to Kate Stange's work on Schmidt Tessellations,  \href{https://math.katestange.net/illustration/schmidt-arrangements/}{https://math.katestange.net/illustration/schmidt-arrangements/}.  See also~\cite{harriss2020algebraic} and~\cite{dorfsman2021searching} who connect Galois theory and visualization of roots of polynomials (and therefore, although they do not point this out, of eigenvalues of Bohemian matrices).

It is only a small jump from polynomials of bounded height to \textsl{matrices} of bounded height; but the questions become (to our minds) even more interesting.

\section{Organization of the paper}
In section~\ref{sec:questions} we mention a few research questions about Bohemian matrices that may be interesting to the computer algebra community; in~\ref{sec:structures} we discuss specific matrix structures and give our main theorems, which describe and explain the constraints that these matrix structures place on the spectra. In particular, we give a new (and quite surprising) theorem about eigenvalues of upper Hessenberg Toeplitz matrices which gives an essentially complete explanation of the ``fractal'' edges seen in some of the figures. This is based on the well-known asymptotic spectral theory of Toeplitz matrices, but extended to the case where we have an uncountable number of such matrices in the limit as the dimension goes to infinity.  This new theorem extends a result of Schmidt and Spitzer, which is concerned with Toeplitz matrices whose ``symbol'' is a Laurent polynomial and with certain semi-algebraic curves that arise from that Laurent polynomial, to matrices whose symbol is a Laurent \emph{series}.

Together these theorems explain the appearance of some of these figures.
We also explain some of the ``algebraic number starscape" appearance~\cite{harriss2020algebraic} and connect to Schmidt tesselations~\cite{Stange2017} by making an \textsl{approximate} computation of eigenvalues, and displaying the results in~Figure \ref{fig:Rayleigh}.

\section{Some questions of interest\label{sec:questions}} 
Every polynomial written in the monomial basis can be embedded as a Frobenius companion matrix into a matrix of the same \textsl{height} (the height of a matrix $\mathbf{A}$, as opposed to a polynomial, is the infinity norm of the matrix reshaped into a vector). Therefore every question about roots of polynomials of bounded height translates directly into a question about eigenvalues of Bohemian matrices.  It will become clear as we go that this is one-way, that is,  there are questions of Bohemian matrices that do not translate into questions about bounded height polynomials.

One question is ``which matrices in the family have the largest \textsl{characteristic height}?''  The characteristic height is the height of the characteristic polynomial; as previously noted, the characteristic height might be exponentially larger than the matrix height.  This is so for certain upper Hessenberg Toeplitz matrices~\cite{Chan2020}, where a lower bound containing a Fibonacci number is given for the maximum characteristic height in the family studied in that paper.



The characteristic height is connected to the numerical conditioning of the characteristic polynomial; indeed one may take the Lebesgue constant for the polynomial~\cite[ch.~8]{CorlessFillion2013} on the interval $-1 \le x \le 1$ to be the characteristic height.



Typically, one asks these questions in an asymptotic form; one wants answers valid in the limit as the dimension~$m$ goes to infinity.  Examples of this include work by Tao and Vu, who show that for general unstructured matrices the distribution of eigenvalues divided by $\sqrt{m}$ is asymptotically uniform on the unit disk~\cite{tao2006random,tao2017random}. This is \textsl{not} true for structured matrices; see e.g.~\cite{Corless:2021:WhatCan,Corless2021} where skew-symmetric tridiagonal matrices (independently of dimension, so no scaling is needed) are seen to be confined to a diamond shape. We explain that mystery in this paper, using a century-old theorem, which deserves to be better-known.

\section{The Effect of Matrix Structure\label{sec:structures}}
The first Bohemian results were on real symmetric matrices or Hermitian matrices~\cite{Tracy1993}. We look at other structures here, in order to get another view.  We begin with complex symmetric matrices.
\subsection{Complex Symmetric Matrices}
A complex symmetric matrix $\A$ satisfies $\A^T = \A$, where ${}^T$ is the real transpose operation.  These occur, for instance in B\'ezout matrices for polynomials with complex coefficients.  Unlike Hermitian matrices, the eigenvalues of complex symmetric matrices need not be real.  Indeed, any matrix may be brought by similarity transformation to a complex symmetric matrix~\cite[Thm 4.4.9]{horn2012matrix}. In many cases this can be done by unitary similarity; see for instance, the characterizations of when this can be done, in~\cite{Liu2013}.

Here let us examine a specific complex symmetric family, with population $-1\pm i$ where $i = (0,1)$ is the square root of $-1$. At dimension $m$, such a matrix has $m(m+1)/2$ free entries, each of which can be either of $-1 \pm i$.  This gives $2^{m(m+1)/2}$ such matrices; this growth is (much) faster than exponential.  Still, examining eigenvalues of small dimension examples can tell us much. If we take $m=6$, then the number of such matrices is only $2^{21}$, slightly more than $2$ million.  The eigenvalues of all these matrices can be computed in a reasonable time, and plotted.  As depicted in Figure~\ref{fig:symmetric6}, they seem confined to a strip in the left-half plane.

We now prove that this will always be true at any dimension.

\begin{theorem}\label{thm:bent}
If the symmetric matrix $\A$ has dimension $m$, entries drawn from $-1\pm i$, and eigenvalue $\lambda$, then $-m \le \Re(\lambda) \le 0$ and $-m \le \Im(\lambda) \le m$.
\end{theorem}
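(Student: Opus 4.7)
Proof proposal. The key observation is that the population is peculiar: every entry has real part exactly $-1$ and imaginary part exactly $\pm 1$. So the plan is to decompose
\[
\A \;=\; -\J \;+\; i\,\Sorig,
\]
where $\J$ is the $m\times m$ matrix of all ones and $\Sorig$ is a real symmetric matrix whose entries are $\pm 1$. Because $\A$ is complex symmetric, $\A^{*}=\overline{\A}=-\J - i\Sorig$, so the Hermitian and skew-Hermitian parts of $\A$ are
\[
\H \;=\; \tfrac{1}{2}(\A+\A^{*}) \;=\; -\J, \qquad \K \;=\; \tfrac{1}{2i}(\A-\A^{*}) \;=\; \Sorig,
\]
both of which are Hermitian.

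Next I would invoke the Bendixson--Hirsch theorem (the ``century-old'' fact Corless alludes to in the Introduction): the eigenvalues of any square matrix lie in the rectangle $[\lambda_{\min}(\H),\lambda_{\max}(\H)] \times [\lambda_{\min}(\K),\lambda_{\max}(\K)]$ in the complex plane. Equivalently, one can just use the numerical-range argument: for a unit eigenvector $v$ with $\A v = \lambda v$, $\lambda = v^{*}\A v$, whence $\Re(\lambda)=v^{*}\H v$ and $\Im(\lambda)=v^{*}\K v$, and these are bounded by the extreme eigenvalues of $\H$ and $\K$ respectively.

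It then remains to estimate the spectra of $\H=-\J$ and $\K=\Sorig$. For $\H=-\J$, writing $\J = ee^{T}$ with $e$ the all-ones vector shows $\J$ has eigenvalue $m$ once and $0$ with multiplicity $m-1$; hence the eigenvalues of $\H$ are $-m$ and $0$, yielding $-m\le\Re(\lambda)\le 0$. For $\K=\Sorig$, a real symmetric matrix with $\pm 1$ entries, a Cauchy--Schwarz or Gershgorin bound gives $\|\Sorig\|_{2}\le\sqrt{\|\Sorig\|_{1}\,\|\Sorig\|_{\infty}}=m$, so $-m\le\Im(\lambda)\le m$.

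There is no serious obstacle here; the only slightly delicate point is remembering that the relevant involution on $\A$ is conjugate-transpose (not the real transpose that defines the symmetry of the family), so that the Hermitian/skew-Hermitian split nicely separates the fixed real part of each entry from the fluctuating imaginary part. Once that split is made the two one-sided bounds fall out of Bendixson, and the sharpness of the vertical strip seen in Figure~\ref{fig:symmetric6} is explained by the fact that $\H=-\J$ has only two distinct eigenvalues, so $\Re(\lambda)$ clusters near $0$ except for components with significant overlap with the all-ones direction.
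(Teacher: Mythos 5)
Your proof is correct and takes essentially the same route as the paper: the same decomposition of $\A$ into its Hermitian part $-\E$ (the negative all-ones matrix, with spectrum $\{-m,0\}$) and its skew-Hermitian part, the real symmetric $\pm 1$ matrix, followed by the Bendixson--Bromwich--Hirsch rectangle bound. The only cosmetic difference is that the paper bounds the spectrum of the $\pm 1$ part by Gerschgorin disks centred at $\pm 1$ of radius $m-1$, while you use the bound $\|\cdot\|_2 \le \sqrt{\|\cdot\|_1\,\|\cdot\|_\infty} = m$; both yield the same conclusion.
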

\begin{proof}
Write $\A = -\E + i\M$ where $\E = \mathbf{e}\mathbf{e}^T$ is the rank-one matrix that has all $1$s, and $\M$ is symmetric and has entries only $\pm 1$.
We use the following theorem:
[Bendixon--Bromwich--Hirsch]~\cite[Fact 5, p.~16-2]{hogben2013handbook} (original references~\cite{Bendixson1902,1902,Bromwich1906})
Write $\A = \H + i\S$ where $\H = (\A + \A^*)/2$ and $\S = (\A - \A^*)/(2i)$. Both $\H$ and $\S$ are Hermitian matrices and therefore their eigenvalues are real.
Denote the eigenvalues of $\H$ by $\mu_1 \ge \mu_2 \ge \ldots \ge \mu_m$ and the eigenvalues of $\S$ by $\nu_1 \ge \nu_2 \ge \ldots \ge \nu_m$. Then all eigenvalues $\lambda$ of $\A$ lie in the box $\mu_m \le \Re(\lambda) \le \mu_1$ and $\nu_m \le \Im(\lambda) \le \nu_1$.
This theorem can be proved in several ways; see its 1951 rediscovery by Kippenhahn as translated by~\cite{FZachlin2008}, for instance, where a particularly elegant proof using the numerical range $\Phi(\A,\mathbf{x}) := \mathbf{x}^*\A\mathbf{x}$ is given.

For our Bohemian matrix $\A=-\E+i\M$, $(\A + \A^*)/2 = -\E$, while $(\A - \A^*)/(2i) = \M$. A short computation shows that the eigenvalues of $-\E$ are $-m$ and $0$ with multiplicity $m-1$, and the Gerschgorin disk theorem shows that the eigenvalues of $\M$ lie in the union of circles centred at $1$ of radius $m-1$ and centred at $-1$ of radius $m-1$.  This establishes our theorem.
\end{proof}

\begin{remark}
{We could have stated and proved that theorem with greater generality. That is, the population could equally well have been $a \pm bi$ for real numbers $a$ and $b$ and the conclusions would have been the same, apart from scaling. For simplicity of exposition, we used only this specific example.}
\end{remark}

\begin{figure}
    \centering
    \includegraphics[width=0.9\textwidth]{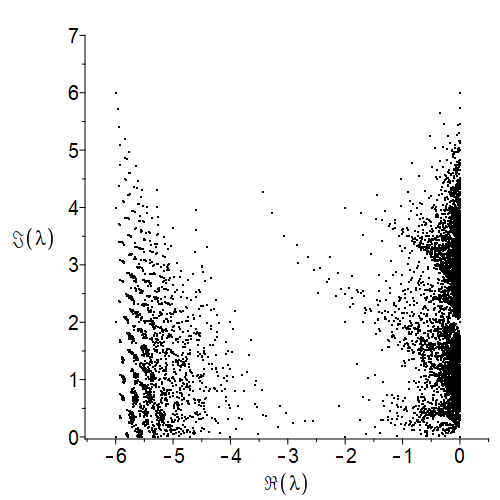}
    \caption{All eigenvalues with $\Im(\lambda) \ge 0$ of symmetric dimension $m=6$ matrices with entries $-1\pm i$. The set is symmetric about the real axis. We see the eigenvalues apparently confined to the strip $-6 \le \Re(\lambda) \le 0$, and bounded below and above by $-6 \le \Im(\lambda) \le 6$. There are several other unexplained features of this eigenvalue distribution.}
    \label{fig:symmetric6}
\end{figure}

We now use this method to explain why skew-symmetric tridiagonal matrices with population $1$ and $i$ (the population considered in~\cite{Corless:2021:WhatCan} and~\cite{Corless2021}) are confined to a diamond shape.

\subsection{Squares, Diamonds, and other shapes}
\begin{theorem}
\label{thm:square}
Let $\A$ be a square skew-symmetric matrix of dimension $m$ and population $-1 \pm i$.  Then its eigenvalues $\lambda$ satisfy $-2 \le \Re(\lambda) \le 2$ and $-2 \le \Im(\lambda) \le 2$, and are thus confined to a square.
\end{theorem}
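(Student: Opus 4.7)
The plan is to mirror the Bendixson--Bromwich--Hirsch (BBH) strategy used in the proof of Theorem~\ref{thm:bent}, with one structural twist. For symmetric $\A$ the real and imaginary parts are already Hermitian and enter BBH directly; for a skew-symmetric $\A$, however, $\A^T = -\A$ forces $\A^* = -\overline{\A}$, so $\Re(\A)$ and $\Im(\A)$ are each real skew-symmetric, and one must multiply by $\pm i$ to obtain the Hermitian ingredients required by BBH. The argument otherwise follows the same template: decompose, invoke BBH, then bound each Hermitian summand by an elementary row-sum (Gerschgorin) estimate enabled by the tridiagonal structure inherited from the surrounding discussion.

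Concretely I would write $\A = \Re(\A) + i\,\Im(\A)$ and form
\[
\H \;:=\; \frac{\A + \A^*}{2} \;=\; i\,\Im(\A), \qquad \S \;:=\; \frac{\A - \A^*}{2i} \;=\; -i\,\Re(\A),
\]
verifying by a one-line conjugate-transpose calculation that both are Hermitian (using $\Re(\A)^T = -\Re(\A)$ and $\Im(\A)^T = -\Im(\A)$). BBH then yields $\lambda_{\min}(\H) \le \Re(\lambda) \le \lambda_{\max}(\H)$ and $\lambda_{\min}(\S) \le \Im(\lambda) \le \lambda_{\max}(\S)$ for every eigenvalue $\lambda$ of $\A$.

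To finish, I would analyse the two Hermitian matrices. Because each off-diagonal entry of $\A$ lies in $\{-1+i,\,-1-i\}$ (with its skew-symmetric partner in $\{1-i,\,1+i\}$), $\Im(\A)$ is a real skew-symmetric tridiagonal matrix with off-diagonal entries $\pm 1$, while $\Re(\A)$ is the fixed skew-symmetric tridiagonal matrix with subdiagonal $-1$ and superdiagonal $+1$. Each row of either matrix contains at most two nonzero entries of modulus one, so Gerschgorin's theorem (or the $\ell^\infty$ row-sum bound) gives spectral norm at most $2$ for both $\Im(\A)$ and $\Re(\A)$, hence also for $\H = i\,\Im(\A)$ and $\S = -i\,\Re(\A)$. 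Since a Hermitian matrix has real eigenvalues bounded in modulus by its spectral norm, the extremal eigenvalues of $\H$ and $\S$ lie in $[-2,2]$, and BBH delivers the claimed square confinement.

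The proof is not technically hard; the main thing to be careful about is the sign-and-conjugation bookkeeping that certifies $i\,\Im(\A)$ and $-i\,\Re(\A)$ really are Hermitian when $\A^T = -\A$, so that BBH is legitimately invoked. The one interpretive point I would flag at the start of the proof is the tridiagonal hypothesis: the bounds $\pm 2$ crucially use the row sparsity, since for a \emph{dense} skew-symmetric Bohemian with the same population the identical decomposition goes through but the Gerschgorin radius grows with $m$, producing a square of side $\Theta(m)$ rather than the constant-size square asserted here.
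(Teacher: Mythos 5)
Your proof is correct and is essentially the paper's own argument: the same splitting $\A = \Re(\A) + i\,\Im(\A)$ (the paper writes $\A = \S + i\T$ with $\S,\T$ real skew-symmetric tridiagonal), the same identification of the Hermitian and skew-Hermitian parts as $i\,\Im(\A)$ and $-i\,\Re(\A)$, Gerschgorin applied to each, and then Bendixson--Bromwich--Hirsch. Your closing caveat is also on target: the paper's proof likewise uses the tridiagonal (superdiagonal-only) structure implicitly, and that row sparsity is exactly what makes the bound $\pm 2$ independent of the dimension $m$.
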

\begin{proof}
A matrix $\A$ from this family can be written as $\A = \S + i \T$ where the superdiagonal of $\S$ is $-1$ and the superdiagonal of $\T$ is $\pm 1$. Both are skew-symmetric: $\S^T = -\S$ and $\T^T = -\T$. We have $(\A + \A^H)/2 = i\T$ and $(\A - \A^H)/(2i) = -i\S$.  Application of the Gerschgorin circle theorem to each of these shows that the eigenvalues of either matrix are confined to the interval $-2 \le \mu \le 2$.  By the theorem of Bendixon--Bromwich--Hirsch cited earlier, the eigenvalues of $\A$ are confined to the \textsl{square} $-2 \le \Re(\lambda) \le 2$, $-2 \le \Im(\lambda) \le 2$.
\end{proof}

\begin{remark}{
Skew-symmetric tridiagonal matrices with population $1$ and $i$ are confined to a diamond $|\Re(\lambda)|+|\Im(\lambda)|\le \sqrt{2}$.  To see this, multiply the matrix by $-1+i$, which rotates its eigenvalues by $\pi/4$ and stretches them by $\sqrt{2}$; but now the matrix population is $-1\pm i$ and it's still skew-symmetric, and hence confined to the square as described above.  Rotate the square back by $\pi/4$ and shrink by $\sqrt{2}$, and the result follows.
}\end{remark}

We have therefore explained the non-round shape of the eigenvalue distribution of these matrices.
\subsection{Upper Hessenberg Matrices}
An upper Hessenberg matrix is a matrix of the form
\begin{equation}
    \begin{bmatrix}
    h_{1,1} & h_{1,2} & h_{1,3} & \ldots & h_{1,m} \\
    h_{2,1} & h_{2,2} &  \ddots & & \vdots \\
    0 & h_{3,2} & h_{3,3} & \ddots &  \vdots \\
    \vdots & & \ddots & \ddots & \vdots \\
    0 & \ldots & 0 & u_{m,m-1} & u_{m,m}
    \end{bmatrix}\>.
\end{equation}
That is, it is zero below the first subdiagonal.  If any entry of the first subdiagonal is zero, then the matrix is said to be reducible, because the matrix then separates into blocks containing distinct eigenvalues. Here we restrict our attention to irreducible matrices and indeed we specify that all the subdiagonal entries $h_{j+1,j} = -1$.  If the subdiagonal entries all have $|h_{j+1,j}| = 1$ we say that the matrix is \textsl{unit} upper Hessenberg.  If all the diagonal entries are zero, we say that it is Zero Diagonal.


Many populations have zero as their mean value (e.g. $\{-1,1\}$ or indeed roots of unity; or $\{-1,0,1\}$).  In that case, the eigenvalues are typically symmetric about zero and a simplified picture is obtained simply by setting all the diagonal entries to zero, in which case the Gerschgorin circles are all centred at $0$.  Sometimes we will have to transform back to the nonzero diagonal case, but a surprising amount of information is retained even with the simplification of insisting on a zero diagonal.

\begin{figure}
    \centering
    \includegraphics[width=0.93\textwidth]{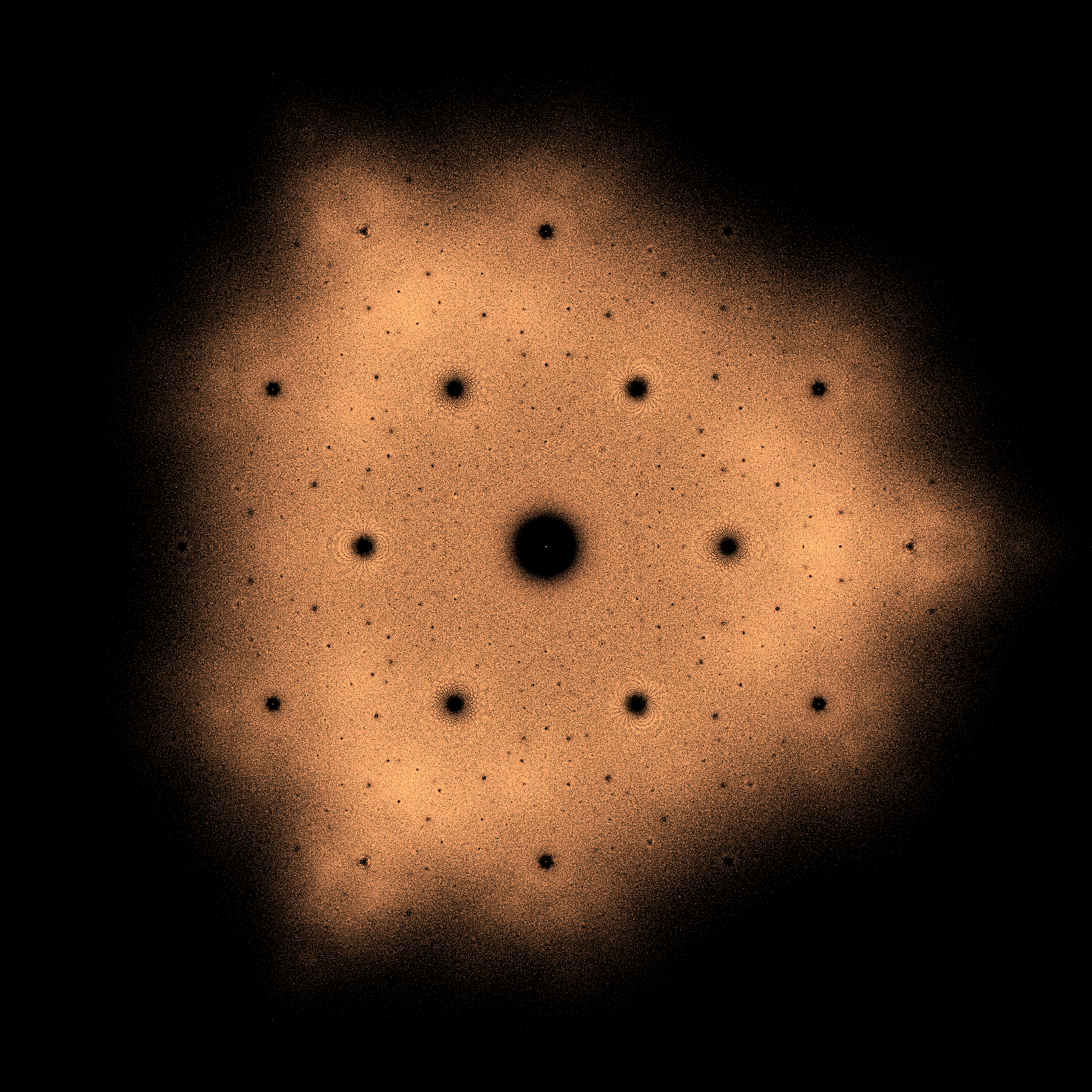}
    \caption{Eigenvalues of a sample of $5$ million upper Hessenberg matrices of dimension $m=5$ matrices with population cube roots of unity. The image is visually indistinguishable from the density plot of all $14,348,907$ unit upper Hessenberg zero diagonal matrices of dimension $m=6$.}
    \label{fig:sampledupperHessenberg}
\end{figure}
\subsection{Rayleigh Quotients}
Recall the Rayleigh Quotient:
\begin{equation}
    r = \frac{\mathbf{y}^T \A \mathbf{x}}{\mathbf{y}^T \mathbf{x}}\>.
\end{equation}
If $\mathbf{x}$ is an approximate eigenvector, and $\mathbf{y}^T$ an approximate left eigenvector, then this quotient is a least-squares approximation to an eigenvalue of $\A$.  If we replace $\A$ by $\A^{-1}$ above, then this is an approximation to an eigenvalue of $\A^{-1}$, and typically the largest one; this of course is the reciprocal of the smallest eigenvalue of $\A$.

We will consider this not as an eigenvalue approximation, but as a process in its own right, and plot the results of a single iteration of this on a Bohemian family, with both $\mathbf{y}$ and $\mathbf{x}$ taken to be the first elementary vectors. Thus the result is the top left corner of the inverse of our Bohemian matrix.  See Figure~\ref{fig:Rayleigh}.
\begin{figure}
    \centering
    \includegraphics[width=0.9\textwidth]{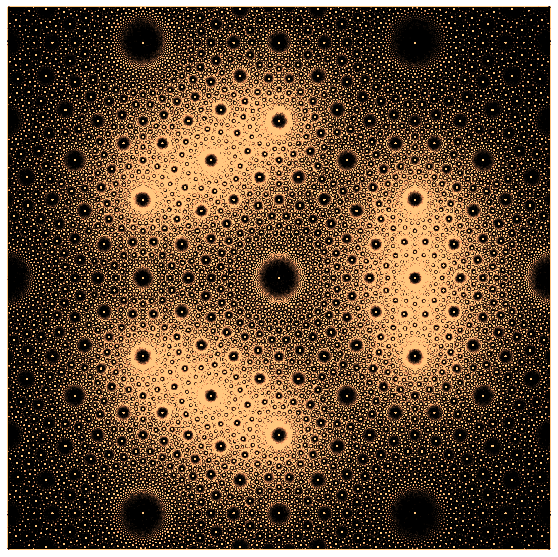}
    \caption{Density plot of the upper left corner element of $\A^{-1}$ where $\A$ is sampled randomly from the $8$ by $8$  upper Hessenberg Bohemian (\textsl{not} unit or zero diagonal) family with population cube roots of unity.  There are $3^{43} > 3.2\cdot 10^{20}$ such matrices; we sampled only $2\cdot 10^7$ of these and gave a density plot on a $2048$ by $2048$ grid, enhanced by anti-aliased point rendering.}
    \label{fig:Rayleigh}
\end{figure}
This can also be computed by the recurrence relation~\eqref{eq:recur}, as the ratio of two determinants, from Cramer's rule:
\begin{equation}
    r = \frac{Q_{m}(0;t_1,t_2,\ldots,t_{m-1})}{Q_{m-1}(0;t_1,t_2,\ldots,t_{m-2})}
\end{equation}
Perhaps surprisingly to a numerical analyst, this offers an effective way to perform this computation when the population consists of small Gaussian integers, which can be represented as complex ``flints'' and are not subject to rounding error when ring arithmetic is carried out in floats. In this case the only division occurs at the end, and so rounding error is trivial.  Even with roots of unity, the rounding errors are generally not of serious consequence.

What we are computing here is representable in (if cube roots of unity are used) $\mathbb{Q}(\sqrt{-3})$, and moreover the rational numbers involved will not have overly large values.  This suggests that the appearance in Figure~\ref{fig:Rayleigh} can be explained as Schmidt arrangements, as done by Katherine Stange.  See her blog at~\url{https://math.katestange.net/}, but note in particular~\cite{Stange2017} and her papers previously cited.

\subsection{Unit upper Hessenberg and Toeplitz zero diagonal matrices}
We will use the following Theorem repeatedly, to ensure that we see all the eigenvalues of the family in the window of the plot.

\begin{theorem}\label{thm:bound}
Suppose that every entry of a unit upper Hessenberg zero diagonal matrix $\mathbf{H}$ has magnitude at most $B$: that is, $|h_{i,j}| \le B$.  Then every eigenvalue $\lambda$ of $\mathbf{H}$ is bounded independently of dimension by
\begin{equation}
    |\lambda| \le 1 + 2\sqrt{B}\>.
\end{equation}
\end{theorem}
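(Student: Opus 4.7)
The plan is to apply a diagonal similarity transform $\D = \diag(1,d,d^{2},\ldots,d^{m-1})$ with a parameter $d\in(0,1)$ to be chosen, and then to invoke Gerschgorin's disc theorem on $\D^{-1}\H\D$. Since similarity preserves the spectrum, any bound obtained on the eigenvalues of the transformed matrix is automatically a bound on those of $\H$.

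The $(i,j)$-entry of $\D^{-1}\H\D$ equals $h_{i,j}\,d^{\,j-i}$. Under the hypotheses of the theorem, the diagonal remains zero; the subdiagonal entries (for which $j=i-1$) have magnitude exactly $1/d$ because $|h_{i,i-1}|=1$; and the upper-triangular entries at displacement $k=j-i\ge 1$ have magnitude at most $B\,d^{\,k}$. Every Gerschgorin disc is therefore centred at the origin, and for every row $i\ge 2$ the Gerschgorin radius is at most
\begin{equation*}
    \frac{1}{d} \;+\; \sum_{k=1}^{m-i} B\,d^{\,k} \;\le\; \frac{1}{d} \;+\; \frac{B\,d}{1-d}
\end{equation*}
(row $1$, which lacks a subdiagonal contribution, is bounded still more tightly). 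Hence every eigenvalue $\lambda$ of $\H$ satisfies $|\lambda|\le 1/d + B d/(1-d)$ for each admissible $d$.

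The remaining step is the one-variable minimization of $f(d)=1/d+Bd/(1-d)$ on $(0,1)$. Setting $f'(d)=0$ reduces to $(1-d)^{2}=Bd^{2}$, whose unique admissible root is $d^{\star}=1/(1+\sqrt{B})$. A short substitution produces $1/d^{\star}=1+\sqrt{B}$ and $Bd^{\star}/(1-d^{\star})=\sqrt{B}$, so $f(d^{\star})=1+2\sqrt{B}$, which is precisely the claimed bound.

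The main conceptual step is guessing the geometric shape of $\D$: it is what makes the inflated subdiagonal term $1/d$ and the contracted upper-triangular geometric tail $Bd/(1-d)$ trade off in exactly the proportion that produces a $\sqrt{B}$ dependence. The optimisation itself is routine calculus. For consistency one may note that the hypotheses $|h_{i,i-1}|=1$ and $|h_{i,j}|\le B$ force $B\ge 1$, so $d^{\star}\le 1/2$ and extending the finite geometric sum to infinity costs little; the bound $1+2\sqrt{B}\ge 3$ is then always meaningful.
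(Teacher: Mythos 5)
Your proof is correct and is essentially the paper's own argument: the paper uses the diagonal similarity $\diag(1,r,\ldots,r^{m-1})$ with $r>1$ (your $d=1/r$), obtains the same Gerschgorin radius $r + B/(r-1) = 1/d + Bd/(1-d)$ with the first row bounded more tightly, and minimizes it (via the AM--GM inequality rather than calculus) at $r = 1+\sqrt{B}$ to get $1+2\sqrt{B}$. The only differences are cosmetic.
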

\begin{proof}
The proof uses an idea already present in~\cite{Schmidt1960}: namely, one chooses a diagonal matrix $\D = \diag(1, r, r^2, \ldots, r^{m-1})$ with a free parameter $r>1$ and considers the similar matrix $\D\H\D^{-1}$, which therefore has the same eigenvalues as $\H$.  Suppose, without loss of generality, that the subdiagonal entries of the zero diagonal unit upper Hesssenberg $\H$ are all $-1$.  Then
\begin{equation}
\D\H\D^{-1} =     \begin{bmatrix}
    0 & h_{1,2}/r & h_{1,3}/r^2 & \ldots & h_{1,m}/r^{m-1} \\
    -r & 0 &  h_{2,3}/r & & h_{2,m}/r^{m-2} \\
    0 & -r & 0 & h_{3,4}/r &  \vdots \\
    \vdots & & \ddots & \ddots & \vdots \\
    0 & \ldots & 0 & -r & 0
    \end{bmatrix}\>.
\end{equation}
The Gerschgorin disk for the first row has radius at most $B/(r-1)$ by comparison with a geometric series; the second and all subsequent rows have the bound $r + B/(r-1)$ for the radius which is larger because $r>1$.  To minimize this bound, we write it as $1 + r-1 + B/(r-1)$ and use the AGM inequality to say that this is minimized when $r-1 = B/(r-1)$ or $r = 1+\sqrt{B}$; this gives the value of the Gershgorin radius as $1+2\sqrt{B}$, as desired.
\end{proof}
\begin{remark}{
This is the only Gerschgorin-like theorem that we are aware of that gives a bound for eigenvalues which is independent of the dimension and depends on the \textsl{square root} of the bound for the entries in the matrix instead of the more usual linear power of the bound.  Of course, if one multiplies a matrix $\A$ by a constant factor, then the eigenvalues must also be multiplied by that factor; but we cannot perform such a multiplication here and remain in the class of \textsl{unit} upper Hessenberg matrices.}
\end{remark}

A Toeplitz matrix $\T$ has constant elements on every diagonal, that is, $t_{i,j} = t_{0,j-i}$. The authors of~\cite{Chan2020} found that the  Toeplitz subset of unit upper Hessenberg matrices maximized the \textsl{characteristic height}, and so decided to study that subset directly; it contains only exponentially many elements ($\#P^{m-1}$, rather than $\#P^{O(m^2)}$) and has several other interesting features.  For large dimension, the spectral theory connects to the well-known asymptotic spectral theory for Toeplitz matrices.
\begin{remark}
``Well-known'' is not the same as ``everyone knows". The major results of this area include the surprising fact that eigenvalues of finite-dimensional Toeplitz matrices do \emph{not} approach the eigenvalues of the infinite-dimensional Toeplitz operator, although the \emph{pseudospectra} do~\cite{Trefethen2005,corless2007pseudospectra,corless2013}. We shall not be concerned with pseudospectra in this paper, although they play a role for computation of eigenvalues of even modestly large dimension Toeplitz matrices.
\end{remark}
\begin{figure}
    \centering
    \includegraphics[width=0.9\textwidth]{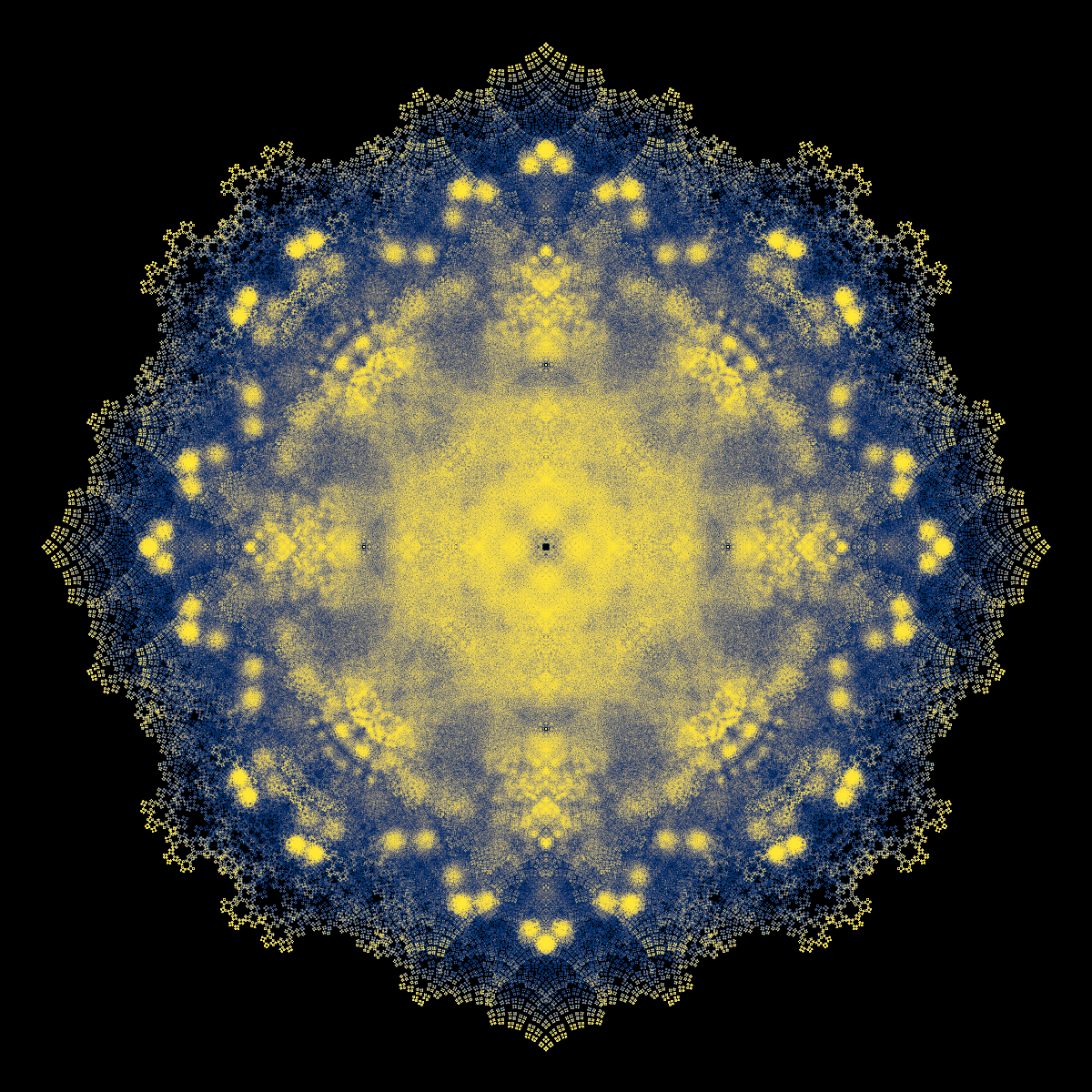}
    \caption{Density plot of  eigenvalues of all $1,048,576$ upper Hessenberg Toeplitz matrices of dimension $11$ with zero diagonal, $-1$ subdiagonal, and population $\pm 1$, $\pm i$ (fourth roots of unity) otherwise.  Brighter colours correspond to higher density.}
    \label{fig:fourthroots}
\end{figure}
\begin{figure}
    \centering
    \includegraphics[width=0.9\textwidth]{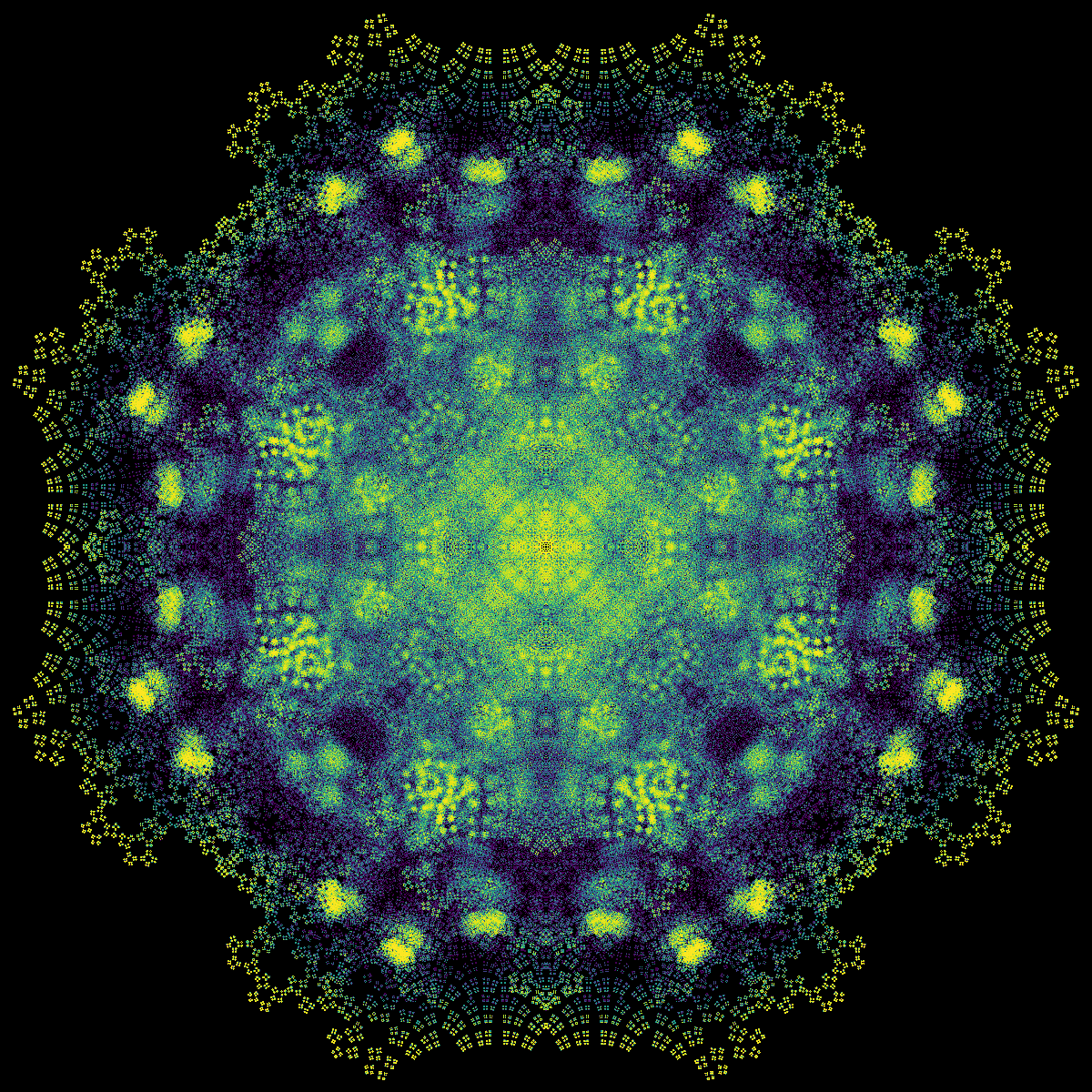}
    \caption{Density plot of  eigenvalues of all $262,144$ upper Hessenberg Toeplitz matrices of dimension $10$ with zero diagonal, $-1$ subdiagonal, and population $\pm 1 \pm i$ (four corners of a square) otherwise.  Brighter colours correspond to higher density.}
    \label{fig:scottishcross}
\end{figure}

At the edges of Figures~\ref{fig:fourthroots} and~\ref{fig:scottishcross} we see clear indication of fractal gasket-like structures. When the population is third roots of unity, we see Sierpinski gaskets; when the population only has two elements, we see pairs, and pairs of pairs, recursively; with five-element populations we see recursive pentagonal structures.  The phenomenon seems universal for upper Hessenberg Toeplitz matrices. We conjectured that there must be a recursive construction underlying this, and we now know this to be true.

The keys to understanding this are the spectral theory of Toeplitz matrices, the improved Gerschgorin bound of Theorem~\ref{thm:bound}, and a recurrence for the characteristic polynomials.  Denote the characteristic polynomial of a unit upper Hessenberg zero diagonal matrix by $Q$: then we have
\begin{align}
    Q_{n+1}(z; t_1, t_2, \ldots, t_{n})
    =& z Q_n(z; t_1, t_2, \ldots, t_{n-1}) \nonumber \\
    &{}- \sum_{k=1}^{n} (-1)^k t_k Q_{n-k}(z; t_1, t_2, \ldots, t_{n-k-1})\>.\label{eq:recur}
\end{align}
There is a somewhat more complicated recurrence relation for a general upper Hessenberg matrix; see~\cite{Chan2020}.
Proofs can be found in many places, for instance~\cite{Cahill2002}.
The final term is $\pm t_n Q_0(z)$ and $Q_0(z) = 1$.

Using Theorem~\ref{thm:bound} with $r=2$ (this is also used in~\cite{bogoya2022upper}) we find the Toeplitz symbol to be
\begin{equation}
    -\frac{2}{e^{i\theta}} + \sum_{k\ge 1} \frac{t_k}{2^k}e^{ik\theta}\>.
\end{equation}

The eigenvalues of any $m$-dimensional Toeplitz matrix with a \emph{finite} symbol---that is, a Laurent polynomial---are known to converge as $m\to\infty$ to the union of several algebraic curves described by the Schmidt--Spitzer theorem~\cite[Ch.~22]{hogben2013handbook}. For brevity we will write SS for Schmidt--Spitzer hereafter. An improved algorithm for computing these curves is given in~\cite{Bottcher2021}, which we have implemented. For convenience, we state the SS theorem here.

\begin{theorem}
Given the Laurent polynomial
\begin{equation}
    f(z) = \sum_{\nu=-q}^h a_{\nu} z^\nu
\end{equation}
where $q>0$, $h>0$ and $a_{-q}a_h \ne 0$, and given $\lambda \in \mathbb{C}$, define the polynomial
\begin{equation}
    Q(\lambda;z) = z^q(f(z) - \lambda)\>.
\end{equation}
Denote the moduli of its $q+h$ zeros counted according to multiplicity by $\alpha_j$ for $1 \le j \le q+h$.  Assume that they have been ordered so that $\alpha_1 \le \alpha_2 \le \cdots \le \alpha_{q+h}$.  Then $\lambda \in \Lambda$, that is the set of SS points, if and only if the $q$th largest and $(q+1)$st largest roots have the same magnitude: $\alpha_q = \alpha_{q+1}$. Recall that $q$ is fixed by the Laurent polynomial.
\end{theorem}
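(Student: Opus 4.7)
The plan is to reduce the condition $\lambda\in\Lambda$ to an exponential-asymptotics statement about the characteristic determinant $D_n(\lambda):=\det(\lambda I-T_n(f))$ via the classical \emph{Widom}/Baxter--Schmidt representation. For each $\lambda$ at which the $q+h$ roots $z_1(\lambda),\ldots,z_{q+h}(\lambda)$ of $Q(\lambda;z)$ are pairwise distinct, this representation takes the form
\[
D_n(\lambda) \;=\; \sum_{\substack{S\subset\{1,\ldots,q+h\}\\ |S|=h}} V_S(\lambda)\prod_{j\in S}z_j(\lambda)^{\,n},
\]
where each coefficient $V_S(\lambda)$ is a Vandermonde-type determinant in the selected roots. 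The first step of my proof is to derive or invoke this formula, say by diagonalising the banded-Toeplitz shift on the $(q+h)$-dimensional solution space of the linear recurrence with characteristic polynomial $Q(\lambda;z)$ and applying a Cauchy--Binet expansion to the $n$-by-$n$ block of boundary conditions; the $\binom{q+h}{h}$ subsets $S$ enumerate the distributions of roots between the left and right boundary constraints.

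For the direction $\lambda\in\Lambda\Rightarrow\alpha_q=\alpha_{q+1}$ I would argue by contradiction. Order the roots so that $\alpha_1\le\cdots\le\alpha_{q+h}$ and suppose $\alpha_q<\alpha_{q+1}$. Then the $h$-subset $S^\star=\{q+1,\ldots,q+h\}$ uniquely maximises $\prod_{j\in S}|z_j|$, and every other subset contributes a term whose magnitude is smaller by at least a factor $(\alpha_q/\alpha_{q+1})^n$, which decays exponentially in $n$. It follows that, for $\mu$ in a small disk about $\lambda$ and $n$ sufficiently large,
\[
|D_n(\mu)| \;\ge\; \tfrac{1}{2}|V_{S^\star}(\mu)|\prod_{j\in S^\star}|z_j(\mu)|^{\,n} \;>\;0,
\]
contradicting the existence of a sequence of eigenvalues of $T_n(f)$ converging to $\lambda$. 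The thin algebraic locus on which $V_{S^\star}$ vanishes or two roots collide is handled by continuity, together with the fact that $\Lambda$ is closed and that its generic stratum is dense in it.

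For the converse, when $\alpha_q=\alpha_{q+1}$ two $h$-subsets tie for the largest product of moduli: $S^\star=\{q+1,\ldots,q+h\}$ and $S'=(S^\star\setminus\{q+1\})\cup\{q\}$. Writing $D_n(\mu)=V_{S^\star}(\mu)\prod_{j\in S^\star}z_j^{\,n}+V_{S'}(\mu)\prod_{j\in S'}z_j^{\,n}+R_n(\mu)$ with $|R_n|$ a factor $(\alpha_{q-1}/\alpha_q)^n$ smaller than the leading terms uniformly on a small neighbourhood of $\lambda$, the equation $D_n(\mu)=0$ asymptotically reduces to the phase condition
\[
\bigl(z_{q+1}(\mu)/z_q(\mu)\bigr)^n \;=\; -\,V_{S'}(\mu)/V_{S^\star}(\mu).
\]
Because $|z_{q+1}/z_q|=1$ along the Schmidt--Spitzer locus and its argument varies nontrivially with $\mu$, the left-hand side winds around the unit circle $\Theta(n)$ times as $\mu$ traces a small loop around $\lambda$. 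A Rouch\'e/argument-principle application then produces honest zeros $\mu_n$ of $D_n$ within distance $O(1/n)$ of $\lambda$, hence honest eigenvalues of $T_n(f)$ accumulating at $\lambda$.

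The main obstacle I foresee is the bookkeeping on the non-generic stratum: when more than two roots share the maximal modulus, when several subsets tie for dominance, when some coefficient $V_S$ vanishes, or when two roots of $Q(\lambda;z)$ collide. Each such configuration lies on a lower-dimensional algebraic subvariety of the $\lambda$-plane, but ruling out the possibility that an entire connected component of $\Lambda$ lives there requires either a Puiseux-series expansion at the branch points of the multi-valued algebraic functions $\lambda\mapsto z_j(\lambda)$, or a slicker argument using upper-semicontinuity of the spectra $\sigma(T_n(f))$. Everything else reduces to elementary linear algebra on the $z_j$'s and basic complex analysis.
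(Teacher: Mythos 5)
The paper does not prove this theorem; it states it ``for convenience'' as the classical Schmidt--Spitzer theorem from \cite{Schmidt1960}, cited via \cite[Ch.~22]{hogben2013handbook}, and then immediately specializes to $q=1$. There is therefore no in-paper proof for your attempt to be compared against --- you are reproving a result the paper treats as known.

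Taken on its own terms, your sketch follows the now-standard route (essentially the one in B\"ottcher and Grudsky's treatment of banded Toeplitz spectra): the Widom/Baxter--Schmidt expansion of $D_n(\lambda)=\det(\lambda I-T_n(f))$ over $h$-subsets of the roots of $Q(\lambda;z)$; strict exponential domination by $S^\star=\{q+1,\dots,q+h\}$ to exclude eigenvalues near $\lambda$ when $\alpha_q<\alpha_{q+1}$; and a two-term balance plus a Rouch\'e/argument-principle step to manufacture eigenvalues of $T_n(f)$ accumulating at $\lambda$ when $\alpha_q=\alpha_{q+1}$. The gap you flag at the end is the genuine one. The domination estimate controls $D_n$ only off the degenerate locus where roots across the dominant/subdominant split collide (there the Widom coefficients $V_S$ are singular, not merely small), and closedness of $\Lambda$ by itself does not exclude an isolated point of $\Lambda$ lying inside $\{\alpha_q<\alpha_{q+1}\}$ on that thin set. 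One must additionally establish that $\{\alpha_q=\alpha_{q+1}\}$ is a finite union of analytic arcs and that $\Lambda$ contains no isolated points, which is typically done by analytic continuation of the algebraic functions $\lambda\mapsto z_j(\lambda)$ (Puiseux expansions at branch points) --- precisely the alternative you name but do not carry out. With that structural lemma supplied, your plan closes; as written it is an on-track but incomplete sketch of a theorem the paper itself cites without proof.
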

\begin{proof}
See~\cite{Schmidt1960}. One key piece is to use Rouch\'e's theorem and a contour integral for a winding number; the polynomial nature of the symbol is essential. The behaviour of the nonpolynomial case is usually quite different, and connect with pseudospectra.  The proofs in that paper also typically assume that eigenvalues are simple, which is generically true.  For us, multiple eigenvalues have nonzero but typically exponentially small probability. Extensions to this theorem such as in~\cite{garoni2017generalized} use more sophisticated ideas.
\end{proof}

What we observed experimentally, as demonstrated in Figure~\ref{fig:SchmidtSpitzer}, is that the SS curves themselves seem to converge rapidly (basically already for dimension $n=m$) as we add new sequence elements $t_k$. This is \emph{in spite} of the known sensitivity of eigenvalues of Toeplitz matrices to perturbations; indeed the sensitivity is exponential in the dimension of the matrix. Why, then, do we see convergence, in this case?

The following sequence of lemmas explain how this can be true.
\begin{lemma}
Suppose a fixed infinite sequence $t_k \in P$ is given, where each $|t_k|\le B$. Take $\rho \in (0,1)$.
The truncated Toeplitz symbols $a_{m}(z) = -1/z + \sum_{k=1}^{m-1}t_k z^k$ converge in $0 < |z|\le \rho <1$ to a mereomorphic function $a(z)$, indeed analytic in the punctured disk $0 < |z| \le \rho < 1$.
\end{lemma}
\begin{proof}
Immediate, because the tail is bounded by a geometric series.
\end{proof}
\begin{lemma}
The scaled symbols $a_{m}(z/r) = -r/z + \sum_{k=1}^{m-1} t_k/r^k z^k$ converge to a function analytic in the punctured disk $0 < |z| \le \rho < r$, where now $\rho$ can be taken to be larger than $1$ by taking $r>1$. Moreover, the SS curves of all the unscaled symbols $a_m(z)$ are contained in the images $a_{m}(e^{i\theta}/r)$ of the unit circle, which are closed curves contained in a finite subset of the complex plane. Therefore the SS curves are also contained in the intersection of these images over all $r>1$.
\end{lemma}
\begin{proof}
Almost immediate, by using the same diagonal scaling used in the proof of Theorem~\ref{thm:bound}.
\end{proof}
\begin{remark}
It is this scaling using $r>1$ which allows us to use symbols which are not in $L^\infty$ when $r=1$: the eigenvalues and curves which the \emph{scaled} and therefore $L^\infty$ symbols enclose are precisely the same eigenvalues and curves as the unscaled version.
Our use of this scaling requires the matrix to have only finitely many subdiagonals, such as being upper Hessenberg.
\end{remark}
\begin{theorem}
The SS curves for the unscaled $a_{m}(z)$ converge in the Hausdorff metric,  in the limit as $m\to \infty$,  to a set $\Lambda$ of piecewise analytic arcs inside the image $a(e^{i\theta}/r)$ for any $r>1$.
\end{theorem}
\begin{proof}
The Laurent polynomial symbols $a_{m}(z)$ obtained by truncation to degree $m-1$ converge uniformly to an analytic function $a(z)$ as $m\to \infty$ in $0 < |z| \le \rho < 1$, by Theorem 2.7a in~\cite[Vol I]{Henrici}.  By \href{https://en.wikipedia.org/wiki/Hurwitz's_theorem_(complex_analysis)}{Hurwitz' theorem}~\cite{titchmarsh1939theory} for fixed $\theta \ne 0$ the zeros of the sequence $a_{m}(z)-a_m(ze^{i\theta})$ also converge to a (possibly multiple) zero of $a(z)-a(ze^{i\theta})$, for any fixed $\theta$, and every zero of $a(z)-a(ze^{i\theta})$ has the appropriate number of zeros of the sequence converge to it.  If these two equal-magnitude roots, which are distinct because $\theta\ne 0$ and both in $0 <|z| \le \rho < 1$, are the smallest (which are what we care about, because $q=1$), then $\lambda = a(z) = a(ze^{i\theta}) \in \Lambda$, the SS curve.
\end{proof}
\begin{remark}
This theorem partially explains why eigenvalues of the $m$-dimensional Toeplitz matrices with symbol $a_{m}(z)$ converge to $\Lambda$ as $m \to \infty$.
The symbols $a_m(z)$ converge to $a(z)$, and the number of roots of $a(z)-a(ze^{i\theta})$ in any punctured disk $0 < |z| \le \rho < 1$ is finite by the regularity of $a(z)$; therefore, there exists $m_0$ so so that the number of roots of $a_m(z)-a_m(ze^{i\theta})$ if $m\ge m_0$ will be at least $1$ because the \emph{finite} SS curve (with all $t_k = 0$ if $k\ge m_0$) is not empty; call the smallest magnitude root $z_m^*$ and the corresponding limiting root $z_\infty^*$. Since, by compactness, the roots  converge uniformly as $m \to \infty$, $\lambda^* = a(z_\infty^*)$ is close to $\lambda_m^* = a_m(z_m^*)$.  Therefore, the the eigenvalues of \emph{any} Toeplitz family where we \emph{truncate} to a Laurent polynomial of degree $n > m$ will be close to the same curve.  Yet this is only a partial explanation: above, we remarked that the \emph{eigenvalues} are close to the SS curves already for dimension $n=m$, when the SS theorem only guarantees that as $n\to \infty$ the eigenvalues will cluster around the SS curves. It is absolutely critical that these perturbations occur in the top right corner of the matrix; perturbations in the bottom left corner will have drastic effects on the eigenvalues, and the resulting pictures are then explainable by the theory of pseudospectra.
\end{remark}

\begin{remark}{
The SS theorem entails that, for a Laurent polynomial with a pole of order $q$, it is the zeros of $q$th smallest magnitude that matter. For our application, $q=1$ because the matrix is not only Toeplitz but also unit upper Hessenberg with zero diagonal. So it is the \emph{smallest} magnitude zeros that matter, and these are the first ones to converge, in practice.

Consider the example\footnote{This sequence does not give an $L^\infty$ $a(z)$, but that does not matter because we can rescale to $t_k/r^k$, as previously noted.} where all $t_k = 1$. The symbol is then, when $r=1$,}
\begin{equation}
    a(z) = -\frac{1}{z} + \sum_{k\ge 1} z^k
\end{equation}
{or $-1/z + z/(1-z)$. Only the smallest two roots (because the symbol is real, there are two which are complex conjugates of each other) of $a_m(z)-a_m(ze^{i\theta})$ lie inside the image of the unit circle, and the other polynomial zeros in some sense approach the circle of convergence $|z|=1$, as we can see in this case by direct computation and as we know in general by a famous result of Jentsch; for details and an extension of that famous result see~\cite{Blatt2009}.  The compactness, and the scaling by $r>1$,
really matter.}

{Applying the improved algorithm of~\cite{Bottcher2021}, we solve $a(z) - a(e^{i\theta}z)$ and find the two zeros
\begin{equation}
    z = \frac{{\mathrm e}^{\mathrm{i} \theta}+1\pm\sqrt{\left({\mathrm e}^{\mathrm{i} \theta}\right)^{2}-6 \,{\mathrm e}^{\mathrm{i} \theta}+1}}{4 \,{\mathrm e}^{\mathrm{i} \theta}}\>.
\end{equation}
It is not obvious, but $|z| = 1/\sqrt{2} < 1$ for both of these zeros; indeed $z =  e^{\pm i\phi}/\sqrt{2}$ where $\phi$ varies from $\pi/4$ to $\pi$ as $\theta$ varies from $0$ to $\pi$.
Drawing $\lambda = a(z)$ in the complex plane for $-\pi \le \theta \le \pi$ gives the correct half-circle of radius $2$ centred at $-1$ which is the curve $\Lambda$ describing the asymptotic location of the eigenvalues of the Toeplitz matrices with this symbol.}
\end{remark}

\begin{figure}
  \centering

  \begin{subfigure}[t]{.32\linewidth}
    \centering\includegraphics[width=.925\linewidth]{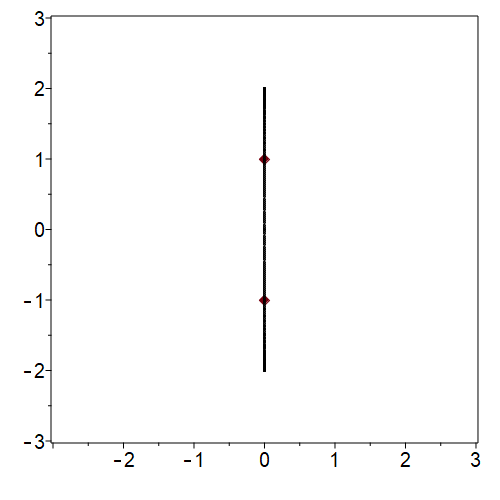}
    \caption{$t_1 = 1$}
  \end{subfigure}
  \begin{subfigure}[t]{.32\linewidth}
    \centering\includegraphics[width=.925\linewidth]{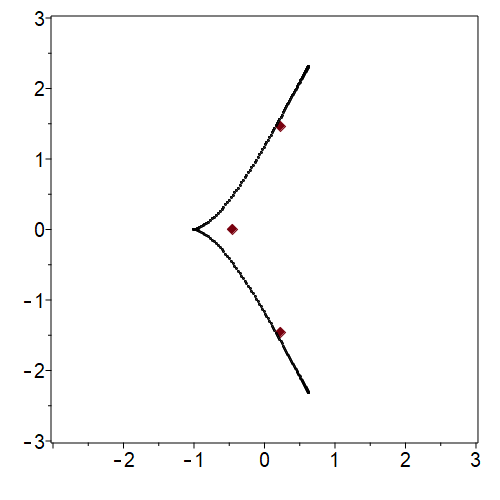}
    \caption{$t_1=1$, $t_2=-1$}
  \end{subfigure}
  \begin{subfigure}[t]{.32\linewidth}
    \centering\includegraphics[width=.925\linewidth]{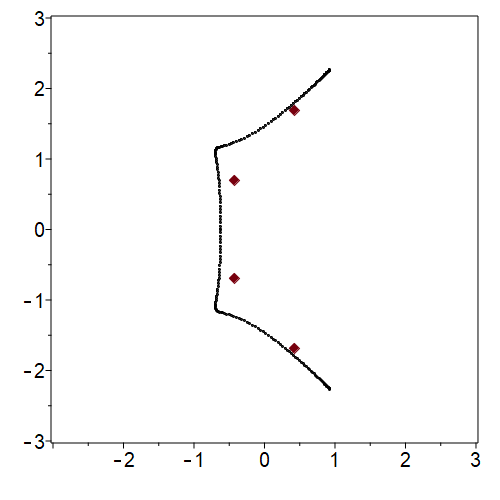}
    \caption{$t_3=1$}
  \end{subfigure}

  \begin{subfigure}[t]{.32\linewidth}
    \centering\includegraphics[width=.925\linewidth]{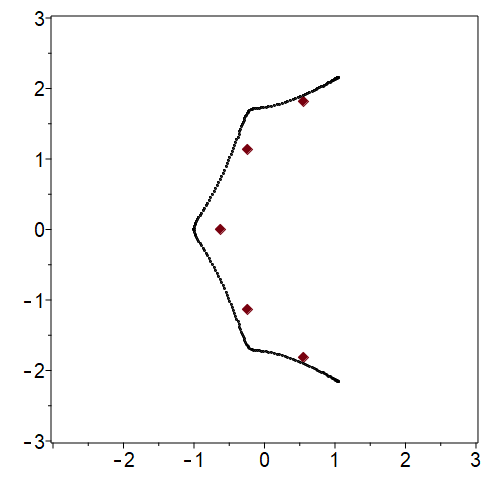}
    \caption{$t_4=-1$}
  \end{subfigure}
  \begin{subfigure}[t]{.32\linewidth}
    \centering\includegraphics[width=.925\linewidth]{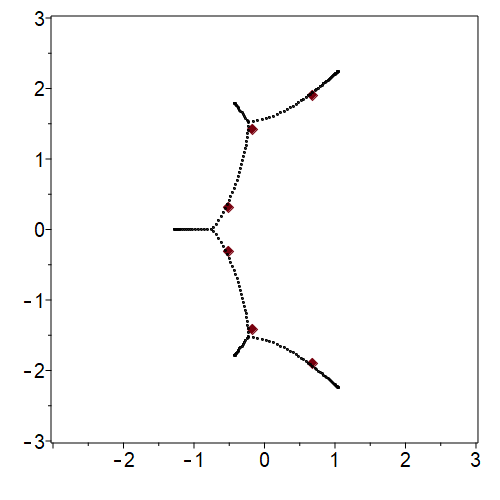}
    \caption{$t_5=-1$}
  \end{subfigure}
  \begin{subfigure}[t]{.32\linewidth}
    \centering\includegraphics[width=.925\linewidth]{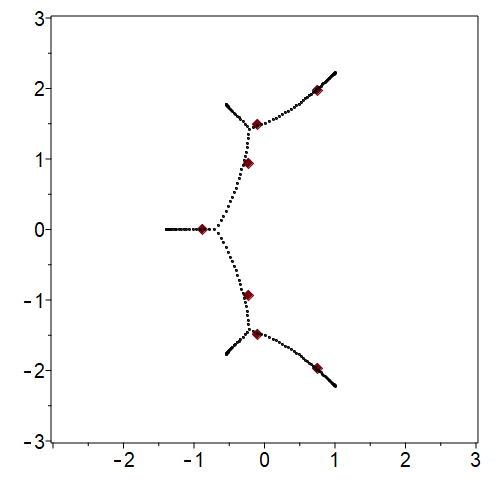}
    \caption{$t_6=-1$}
  \end{subfigure}

  \begin{subfigure}[t]{.32\linewidth}
    \centering\includegraphics[width=.925\linewidth]{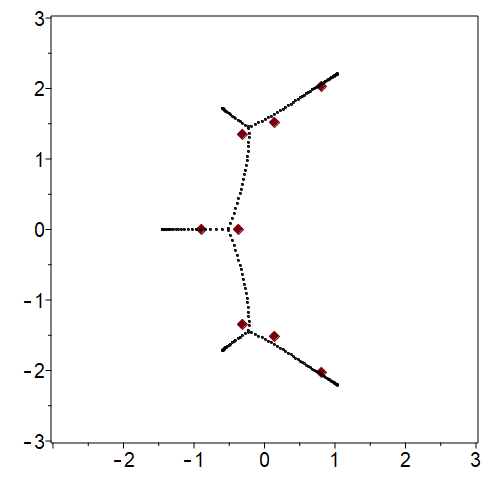}
    \caption{$t_7=-1$}
  \end{subfigure}
  \begin{subfigure}[t]{.32\linewidth}
    \centering\includegraphics[width=.925\linewidth]{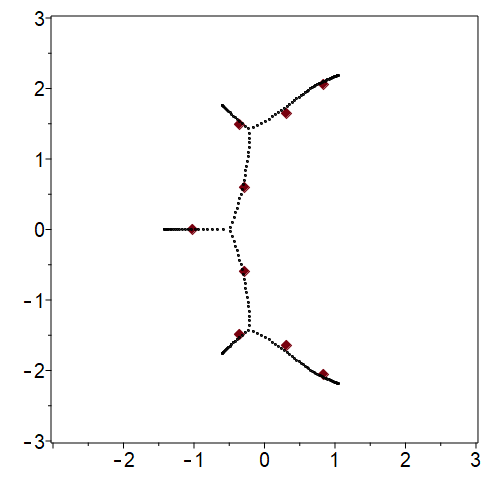}
    \caption{$t_8=1$}
  \end{subfigure}
  \begin{subfigure}[t]{.32\linewidth}
    \centering\includegraphics[width=.925\linewidth]{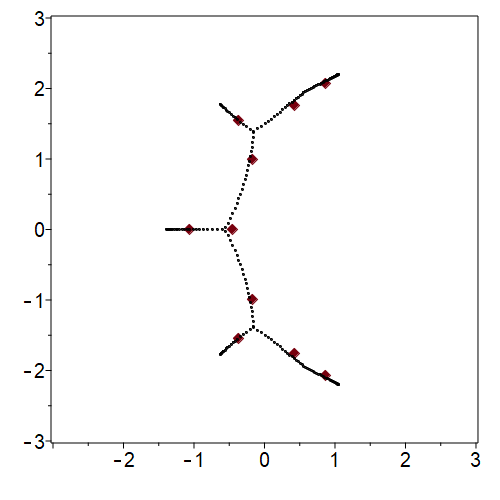}
    \caption{$t_9=1$}
  \end{subfigure}
 \caption{\label{fig:SchmidtSpitzer}The SS curves for Toeplitz matrices with symbol $a(z) = -1/z + \sum_{k=1}^{m-1} t_k z^k$ for a fixed sequence of $t_k$ with $|t_k|\le 1$ and various $m$. As we increase $m$, and thus see a new member of the sequence, the curves are seen to converge quite rapidly. After $m=10$ they are visually indistinguishable from the final case shown here. The red dots are the eigenvalues of the dimension $m$ Toeplitz matrix with those entries. In the limit as $m\to\infty$ these are guaranteed to converge to the SS curves but we see that they are visually there from nearly the beginning.}
\end{figure}

\subsection{Explanation of the Fractal Appearance}
Suppose for instance that the population $P$ of a Bohemian upper Hessenberg zero diagonal family has three distinct elements.  Then
\begin{equation}
    Q_{n+1}(z) = F_n(z) + t_n
\end{equation}
where $F_n(z) = zQ_n(z) - \sum_{k=1}^{n-1} (-1)^k t_k Q_{n-k}$ is a fixed polynomial depending only on previous $t_k$.
This final term $t_n$ perturbs that fixed polynomial in one of (in this case) three ways.  Now use a homotopy argument: Replacing $t_n$ by $st_n$ where $0 \le s \le 1$, we see the roots of $Q_{n+1}$ arising by paths emanating in three directions from the roots of $F_n(z)$.
That is, \textsl{for each root of $F_n(z)$, three nearby roots of different $Q_{n+1}(z)$ arise}.
The roots of $F_n(z)$ are also near the zeros of the symbol.  This explains the fractal structure.

This recursive construction is not a linear one: the structures resembling Sierpinski gaskets seen in the close-up in Figure~\ref{fig:Sierpinski} are clearly not rigid triangles, but rather have been distorted into curved shapes. Nonetheless we believe the above explanation is one way of understanding why this structure arises.
\begin{figure}
    \centering
    \includegraphics[width=0.9\textwidth]{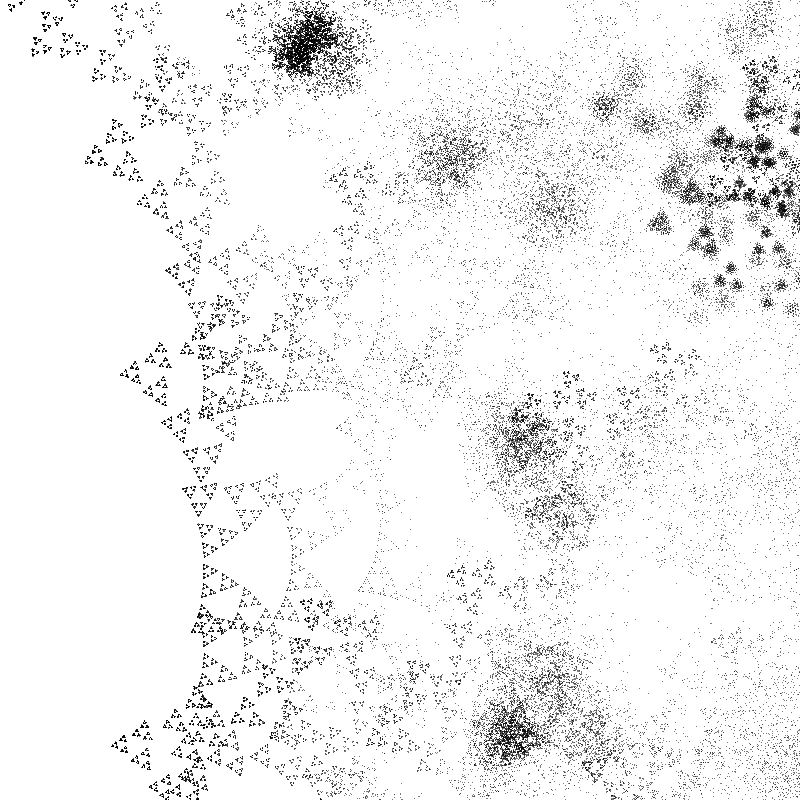}
    \caption{A close-up (window $-2.5 \le \Re(\lambda) \le -1.5$, $0 \le \Im(\lambda) \le 1$) of an $800$ by $800$ density plot of the eigenvalues of all $531,441$ upper Hessenberg zero diagonal matrices with population cube roots of unity.  The resemblance to a Sierpinski gasket is striking.}
    \label{fig:Sierpinski}
\end{figure}

\section{On Visualization\label{sec:visualization}}
Most of the figures in this present paper use only the simplest techniques of visualization: Figures~\ref{fig:uniform} and~\ref{fig:diamond} show \textsl{colourized} density plots of eigenvalues in the complex plane, about which more in a moment. Figure~\ref{fig:symmetric6} is a simple plot.  Figure~\ref{fig:Sierpinski} is a greyscale density plot on an $800$ by $800$ grid.  Figures~\ref{fig:fourthroots} and~\ref{fig:scottishcross} are colourized  density plots, where the colouring scheme was chosen by using a cumulative frequency count in order to attempt to equalize the \textsl{apparent density} of eigenvalues using colour; this verges on true computer imaging techniques but is actually very crude.  The technique has some value because it is relatively faithful to the underlying mathematics: brighter colours correspond to higher eigenvalue density, and when the viridis or, better, cividis colour palette is used, the colours are relatively even perceptually~\cite{nunez2018optimizing}.  The copper palette of Figure~\ref{fig:sampledupperHessenberg}
retains the correlation of brightness to density, but has a smaller colour range.

The appearance of Figure~\ref{fig:Rayleigh}, however, depends on some rather more professional techniques, as described in~\cite{Piponi2009}.  The basic idea is to estimate spatial derivative information (using TensorFlow Gradient) and use that to enhance the figure, making the density visible even with relatively sparse data (for this figure, only $500,000$ matrices were used, and the computational cost was substantially lower than for the other figures).  The technique is called ``\href{https://en.wikipedia.org/wiki/Spatial_anti-aliasing#Signal_processing_approach_to_anti-aliasing}{anti-aliased point rendering}.'' The picture remains faithful to the underlying mathematics, however.

Figure~\ref{fig:stable} also uses this enhancement, this time because without it some features of the eigenvalues (relative increase in density near the edge $\Re(\lambda)=0$, for instance) are not so easy to see.

\section{Concluding remarks\label{sec:conclusion}}
The notion of a \textsl{Bohemian matrix} seems to be a remarkably productive one, with substantial connections to very active areas of research, including visualizations in number theory, combinatorial design, random matrices in physics, numerical analysis, and computer algebra.

Many combinatorial questions about Bohemians, such as ``how many different characteristic polynomials are there'' for a given Bohemian family (say unit upper Hessenberg with population $(-1,0,1)$ for concreteness), can be addressed computationally, but floating-point error is an issue.  One is forced to look at polynomials (and thus computer algebra, however implemented) because of the \textsl{multiple-eigenvalue} problem: in those circumstances, numerical computation of eigenvalues is ill-conditioned and one cannot really count things by ``clustering'' nearby eigenvalues.  One is often tempted, when thinking of random matrices, to say that ``multiple eigenvalues never happen'' but of course this is not true, although typically exponentially unlikely.  Supplying constraints (either on the population or the matrix structure) significantly enhances the probability that multiplicity will be encountered.

One topic some of us have looked at briefly is that of \textsl{stable} matrices.  Which Bohemian matrices have all their eigenvalues strictly in the left half-plane?  For those matrices $\A$, and those matrices only, the solutions to the linear differential equation $\dot y = \A y$ will ultimately decay to zero.  If the dimensions are large, then one may have to consider pseudospectra (and thus matrix non-normality) as well.

It can be unsatisfactory to compute the eigenvalues of a matrix $\A$ and check to see if they are all in the left half plane; rounding errors may drift some of them into the right half plane.  Computation of the characteristic polynomial, and subsequent use of the Routh--Hurwitz criterion, seems in order.  One would like to take advantage of the compression seen for several families: rather than computing eigenvalues of several million matrices, instead compute the roots of the (equivalent) several thousand characteristic polynomials. Better yet, apply the Routh--Hurwitz criterion, which is a rational criterion, to make the decision in an arena uncontaminated by rounding errors.

As an example, consider the symmetric matrices with population $-1\pm i$ of dimension $m=6$. We already know from Theorem~\ref{thm:bent} that all eigenvalues lie in $\Re(\lambda) \le 0$. But are there any of the $4,970$ characteristic polynomials of these $2^{21}$ matrices which have \textsl{all} of their roots strictly in the left half plane? Yes.  By applying Maple's Hurwitz tool to the characteristic polynomials (which were actually computed using Python and exported in a JSON container to Maple) we identified 1328 of these polynomials, all of whose roots were strictly in $\Re( \lambda) < 0$.  Indeed, the maximum real part was approximately $-1.03\cdot 10^{-5}$.  Corresponding to these $1328$ polynomials were $966,240$ matrices, or about $46\%$ of the total.

For upper Hessenberg matrices with population $(-1,-1\pm i)$ and dimension $m=4$, out of $1,594,323$ matrices we find $365,307$ distinct characteristic polynomials.  Of these, only $14,604$ (associated with $66,782$ matrices, about $4.2\%$ of the total) have all their roots strictly in the left half plane.  The maximum real part is about $-7.1\cdot10^{-5}$.  See Figure~\ref{fig:stable}. We enhanced the figure to show more clearly the increase in density near $\Re(\lambda)=0$.  For comparison, we plot the eigenvalues of the \emph{whole} family in Figure~\ref{fig:unstableandstable}. One sees immediately that the majority of matrices in this collection have eigenvalues in the right half-plane.

\begin{figure}
    \centering
    \includegraphics[width=0.9\textwidth]{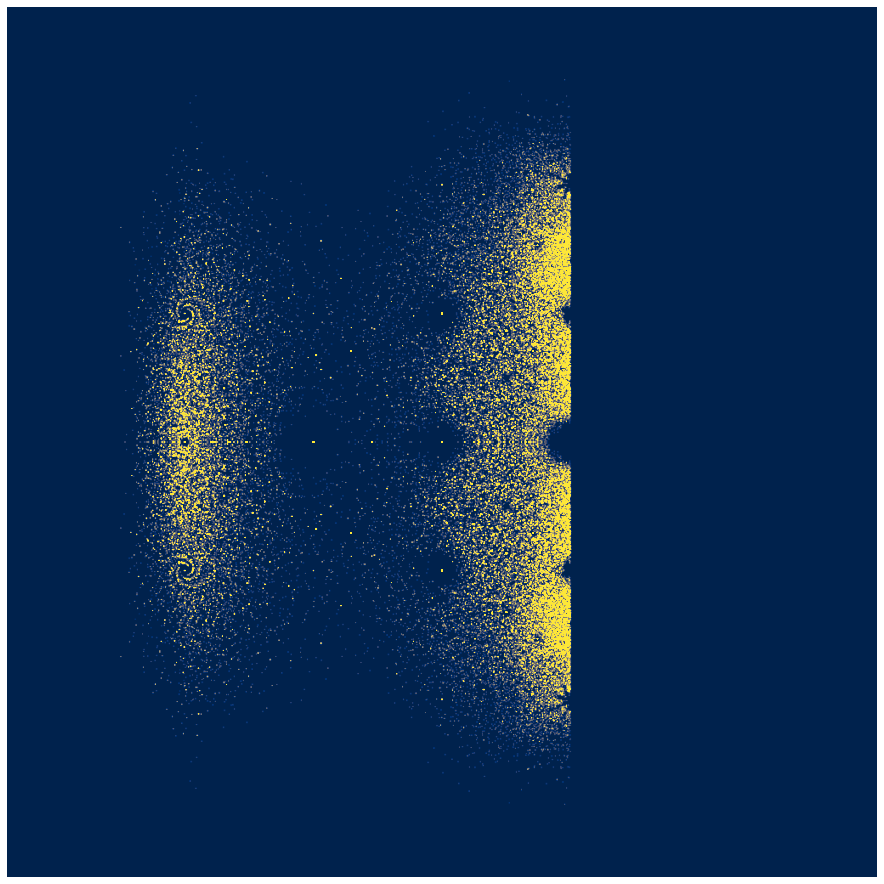}
    \caption{A $1024$ by $1024$ grid density plot of the roots of all $14,604$ stable characteristic polynomials, enhanced by anti-aliased point rendering. The Bohemian family is upper Hessenberg, population $-1-i$, $-1$, $-1+i$, and dimension $m=4$. The plot is on $-L-1 \le \Re(\lambda) < L-1$, $-L \le \Im (\lambda) \le L$, where $L = 1 + 2\cdot 2^{1/4}$. }
    \label{fig:stable}
\end{figure}

\begin{figure}
    \centering
    \includegraphics[width=0.9\textwidth]{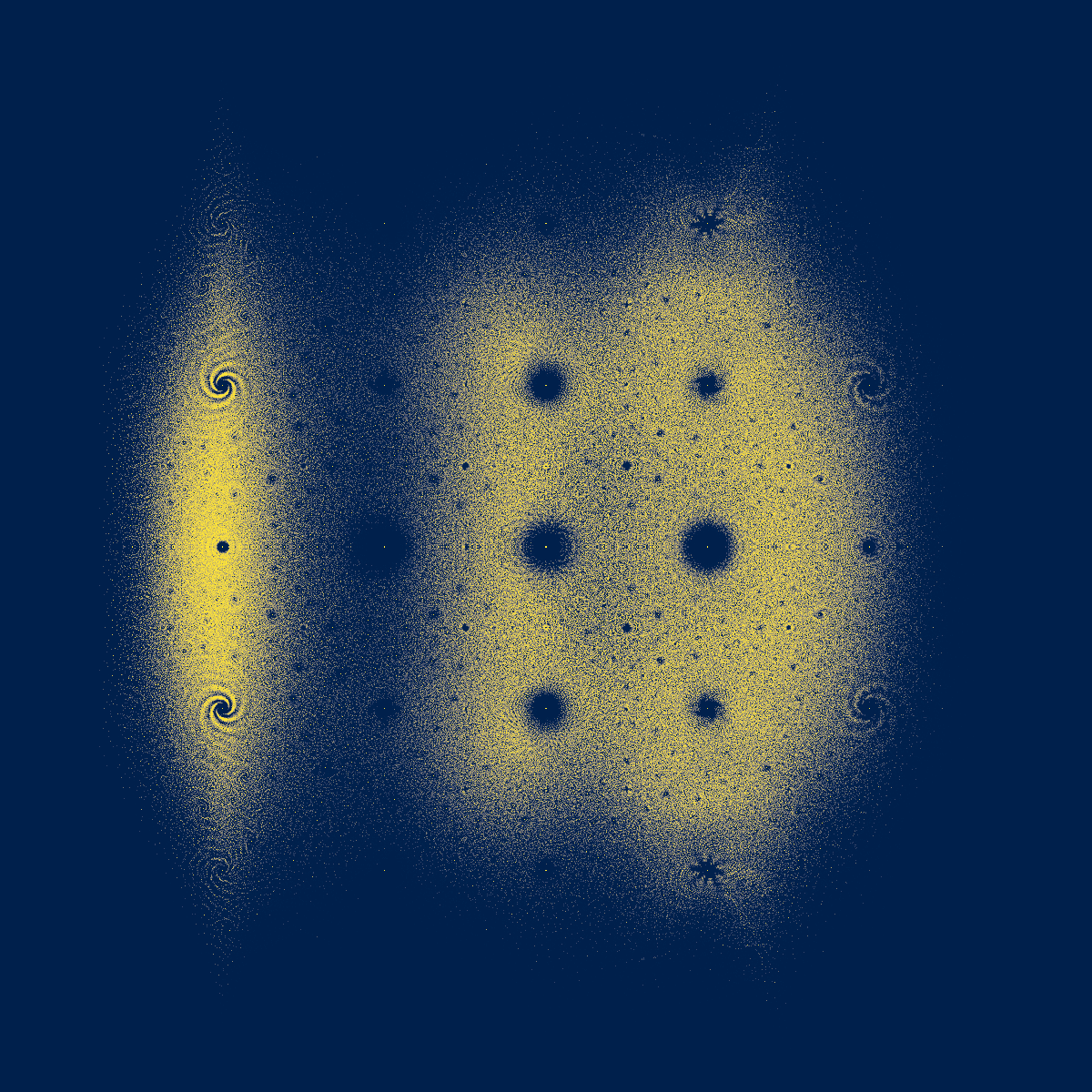}
    \caption{A $1200$ by $1200$ grid unenhanced density plot of the eigenvalues of the whole family, to compare with the plot of Figure~\ref{fig:stable}. The Bohemian family is upper Hessenberg, population $-1-i$, $-1$, $-1+i$, and dimension $m=4$. The plot is on $-L-1 \le \Re(\lambda) < L-1$, $-L \le \Im( \lambda) \le L$, where $L = 1 + 2\cdot 2^{1/4}$. The spirals are completely unexplained. }
    \label{fig:unstableandstable}
\end{figure}

One of the referees suggested that we compare our results on skew-symmetric tridiagonal matrices with those of~\cite{ChandlerWilde2013}, which contains similar pictures.  Indeed, one of their dimension $15$ pictures (their Figure 3, top right, p.~762) clearly shows rounding errors from nilpotent matrices, as explained for the skew-symmetric context in~\cite{Corless2021}. The authors of~\cite{ChandlerWilde2013} do not appear to have noticed that the central rose in that figure is due to rounding errors; to be sure, they only occur at odd dimension and in particular when the dimension is one less than a power of two, so they are easy to overlook (and are not central to the subject of their paper anyway).

The paper~\cite{ChandlerWilde2013} is described in Mathematical Reviews as ``a significant paper'', and now that we are aware of it, we agree. The paper studies infinite tridiagonal matrices with entries drawn from $\pm 1$ at random, and some of its results clearly transfer over to our case.  In particular, they also prove that the eigenvalues must lie in the diamond shape.  They do so by appealing to a known result about \textsl{numerical range}, proved in 1951 by Kippenhahn; a translation is available in~\cite{FZachlin2008}.  Very interestingly, these papers support our contention that the Bendixon--Bromwitch--Hirsch theorem should be better known, because the result of Kippenhahn that is used is exactly this theorem, which Kippenhahn rediscovered (apparently independently). Moreover, none of these papers appear to know that this result is from the first decade of the 20th century, not the middle.  To be fair, Kippenhahn then generalized the result substantially (after rediscovering it), and gave more tools than simple boxes to enclose eigenvalues. Indeed, the algebraic tools he invented are now called Kippenhahn polynomials, and we are very interested to see if they can be used in a Bohemian context.

The paper~\cite{Hagger2015} cites~\cite{ChandlerWilde2013} and proves that their conjecture about the asymptotic density of the \emph{general} problem they study is correct. This implies (among other things) that the spectra of all such matrices is dense on a \emph{disk}, as in the Tao and Vu result for general matrices.  It is only the peculiar special population cases studied here and in~\cite{ChandlerWilde2013} which converge instead to diamond-shaped subsets.

Finally, another of the referees asks if there are unsolved conjectures that arise from the pictures of this paper.  The answer is yes, certainly.  Perhaps the most ``itching'' question has to do with the visible spirals in figure~\ref{fig:unstableandstable}. We suspect that the spirals have something to do with the particular algebraic numbers that occur as eigenvalues of this family, along the lines of~\cite{Stange2017} or~\cite{harriss2020algebraic}.  But it's not so simple, and while other instances of spirals are known (see e.g. the January image of the Bohemian Matrix 2022 calendar), they are rare.


\begin{acks}
The Python code we used for many of the experiments was written largely by Eunice Y.~S.~Chan (with some help by Rob Corless) for another project; we thank her for her work on that code.  We also thank her for many discussions on Bohemian matrices and colouring algorithms.

We thank Mark Giesbrecht for hours of discussion on this topic.

We thank Neil J.~Calkin for valuable discussions about finding good questions to direct research.
We thank Ilias Kotsireas for pointing out several references relevant to the Hadamard matrix literature, and for helpful discussions.  We thank Nick Higham for an independent proof of Theorem~\ref{thm:bent}, and for searching out the original references to Hirsch, Bromwich, and Bendixon.  We thank Stefano Serra--Capizzano for useful comments on an earlier draft and for pointing out other references. We also thank Owen Maresh (@graveolens) for pointing out the connection to Kate Stange's work.

Finally, we thank the referees for their thorough and careful work during this time of climate crisis, war, and pandemic. We especially thank Referee 2 for pointing out the extremely relevant and interesting paper~\cite{ChandlerWilde2013}, which showed us a whole area of research that is relevant to Bohemian matrices.

Partially supported by NSERC grant RGPIN-2020-06438, and
partially supported by the grant PID2020-113192GB-I00 (Mathematical Visualization: Foundations, Algorithms and Applications) from the Spanish MICINN.
\end{acks}

\appendix

\section{Maple code}
The Maple code to implement our version of the algorithm of~\cite{Bottcher2021} can be found in the Maple workbook at \href{https://github.com/rcorless/Bohemian-Matrix-Geometry}{https://github.com/rcorless/Bohemian-Matrix-Geometry}.  The algorithm is laid out in Algorithm~\ref{alg:SS}.
\begin{figure}
    \centering
    \includegraphics[width=0.9\textwidth]{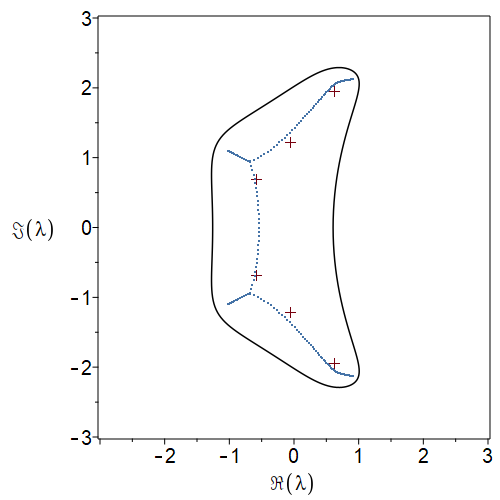}
    \caption{An example SS curve, together with the eigenvalues of the smallest matrix in the Toeplitz family which the Laurent polynomial symbol~\eqref{eq:LaurentPolynomial} applies. The closed curve is the graph of $a(e^{i\theta}/r)$ for $r=1.75$.}
    \label{fig:ToeplitzCup}
\end{figure}
As an example, we chose the vector $\mathbf{t} = [1, -1, 1, 0, 1]$, so $m=6$.  This means that the symbol is the Laurent polynomial
\begin{equation}
    a(z) = -\frac{1}{z} + \sum_{k=1}^{m-1} t_k z^k\>. \label{eq:LaurentPolynomial}
\end{equation}
For Figure~\ref{fig:ToeplitzCup} we chose $101$ values of $\phi$ equally-spaced on $[-\pi,\pi]$. We chose the scale factor $\rho = 1.75$ because it gives a reasonably tight bound on the SS curves when we draw the image $a(e^{i\psi}/r)$ of the unit circle under the map defined by the symbol.
We also plotted the eigenvalues of the dimension $m$ matrix with that population---this is the \emph{smallest} matrix in the family that this symbol and SS curve apply to.  The visible agreement of eigenvalues and SS curve is satisfactory.
\begin{algorithm}[H]
\caption{Schmidt--Spitzer, specialized to unit upper Hessenberg zero-diagonal case}
	\label{alg:SS}
	\begin{algorithmic}[1]
	\REQUIRE Scale factor $\rho > 1$. Vector $\mathbf{t}$ of Toeplitz matrix entries, of length $m-1$. $t_k \in \mathbb{C}$.
	\STATE Construct the scaled Laurent polynomial $a(z) = -\rho/z + \sum_{k=1}^{m-1} t_k (z/\rho)^k$
	\STATE Choose a vector of $\phi$ values in $-\pi \le \phi_j \le \pi $. More vector entries mean a finer resolution of the SS curves.
	We ignore the case $\phi=0$ which requires special handling but only adds isolated points.
	\FOR{$\phi_\ell \in \phi$}
	  \STATE Solve the polynomial equation $za(z) - za(e^{i\phi_\ell}z) = 0$. There are $m$ roots $u_j$.
	  \FOR{$j$ to $m$}
	  \STATE Compute $\lambda = a(u_j)$.
	  \STATE solve $za(z) - z\lambda$. There are $m$ roots $v_k$ again, two of which are $u_j$ and $e^{i\phi_\ell}u_j$.
	  \STATE If $|u_j|=|e^{i\phi_\ell}u_j|$ are the \emph{smallest} roots in magnitude, then $\lambda$ is on a SS curve. Record it and continue.
	  \ENDFOR
	\ENDFOR
	\end{algorithmic}
\end{algorithm}



\end{document}